\newtheorem{assumption}{Assumption}
\newtheorem{observation}{Observation}
\newtheorem{lemma}{Lemma}
\newtheorem{remark}{Remark}
\newtheorem{theorem}{Theorem}
\newtheorem{corollary}{Corollary}
\newtheorem{definition}{Definition}
\newcommand*{\transp}{\mathsf{T}}
\newcommand*{\Z}{\mathbb{Z}}
\newcommand*{\Q}{\mathbb{Q}}
\newcommand{\tabincell}[2]{\begin{tabular}{@{}#1@{}}#2\end{tabular}}
\begin{document}
\title{Approximation Algorithms for Hard Capacitated $k$-facility Location Problems}
\author[1,2]{Karen Aardal}
\author[1]{Pieter van den Berg}
\author[1]{Dion Gijswijt}
\author[1]{Shanfei Li \thanks{Corresponding authors: shanfei.li@tudelft.nl}}
\affil[1]{Delft Institute of Applied Mathematics, Delft University of Technology, The Netherlands}
\affil[2]{Centrum Wiskunde en Informatica, The Netherlands}
\date{}

\maketitle
\begin{abstract}
We study the capacitated $k$-facility location problem, in which we are given a set of clients with demands, a set of facilities with capacities and a constant number $k$. It costs $f_i$ to open facility $i$, and $c_{ij}$ for facility $i$ to serve one unit of demand from client $j$. The objective is to open at most $k$ facilities serving all the demands and satisfying the capacity constraints while minimizing the sum of service and opening costs.

In this paper, we give the first fully polynomial time approximation scheme (FPTAS) for the single-sink (single-client) capacitated $k$-facility location problem. Then, we show that the capacitated $k$-facility location problem with uniform capacities is solvable in polynomial time if the number of clients is fixed by reducing it to a collection of transportation problems. Third, we analyze the structure of extreme point solutions, and examine the efficiency of this structure in designing approximation algorithms for capacitated $k$-facility location problems.
Finally, we extend our results to obtain an improved approximation algorithm for the capacitated facility location problem with uniform opening cost.
\end{abstract}

\section{Introduction}
In the capacitated $k$-facility location problem (CKFL), we are given a set $D$ of clients and a set $F$ of potential facilities (locations where we can potentially open a facility) in a metric space. Each facility $i\in F$ has a capacity $s_i$. Each client $j$ has a demand $d_j$ that must be served. Establishing facility $i$ incurs an opening cost $f_i$. Shipping $x_{ij}$ units from facility $i$ to client $j$ incurs service costs $c_{ij} x_{ij}$, where $c_{ij}$ is proportional to the distance between $i$ and $j$. The goal is to serve all the clients by using at most $k$ facilities and satisfying the capacity constraints such that the total cost is minimized. In this paper, we consider the \emph{hard} capacities, that is, we allow at most one facility to be opened at any location. (Note that in the \emph{soft} capacities case multiple facilities can be opened in a single location \cite{MahdianYZ}.)

CKFL can be formulated as the following mixed integer program (MIP), where variable $x_{ij}$ indicates the amount of the demand of client $j$ that is served by facility $i$, and $y_{i}$ indicates whether facility $i$ is open.

\begin{alignat}{2}
\min\quad&\sum_{i\in F}\sum_{j\in D}c_{ij} x_{ij}+\sum_{i\in F}f_i y_i \label{eq:MIP}\\
\text{subject to:}\quad&\sum_{i\in F}{x_{ij}} =  d_j,&\forall j\in D,\label{eq:demand constraint}\\
&\sum_{j\in D}{x_{ij}} \leq  s_i y_i,&\forall i\in F,\label{eq:capacity constraint}\\
&\sum_{i\in F}{y_i} \leq  k,\label{eq:cardinality constraint}\\
&x_{ij}\geq  0,&\forall i\in F, \forall j\in D,\label{eq:nonegative constraint}\\
&y_{i} \in \{0,1\},&\forall i\in F.\label{eq:binary constraint}
\end{alignat}
If we replace constraints (\ref{eq:binary constraint}) by
\begin{alignat}{2}
0 \leq y_{i}\leq 1, i\in F, \label{binary relaxation}
\end{alignat}
we obtain the LP-relaxation of CKFL. Without loss of generality we suppose that $s_i$, $d_j$ for each $i\in F, j\in D$ are all integral.

CKFL is related to the capacitated $k$-median problem (CKM), in which inputs and goal are the same as CKFL except that there is no opening cost for facilities.
A constant factor approximation algorithm is still unknown for CKM, let alone CKFL.
All the previous attempts with constant approximation ratios
for these problems violate the capacity constraint, or cardinality constraint that at most $k$ facilities are allowed to be used.
We call these approximation algorithms \textit{pesudo-approximation algorithms}.
Recently, Byrka et al. \cite{ByrkaFRS} gave a constant factor approximation algorithm for CKM with uniform capacities
while violating the capacities with a factor $2+\epsilon$, where $\epsilon>0$ can be arbitrarily small.
Although most researchers believe that relaxing the cardinality constraint makes the problem simpler
than relaxing the capacity constraint with respect to designing pesudo-approximation algorithms,
the best known violation ratio for cardinality constraint is still $5+\epsilon$ to get a
constant factor approximation algorithm \cite{KorupoluPR} for CKM with uniform capacities.
It seems that to obtain a better constant factor approximation algorithm with violating the cardinality constraint
has not received much attention yet.

In this paper, we give an improved approximation algorithm for CKFL (with arbitrary capacities) with uniform opening cost by using at most $2k$ facilities.
To show the potential power of this algorithm, we improve the approximation ratio for the capacitated facility location problem
with uniform opening cost \cite{LeviSS}, by combining this algorithm with a pesudo-approximation algorithm for the $k$-median problem derived from
a bifactor approximation algorithm for the uncapacitated facility location problem \cite{CharikarG}.
That is, pesudo-approximation algorithms for capacitated $k$-facility location problems may be extended to get
approximation algorithms for well-studied capacitated facility location problems.
We believe that this technique has the potential to further improve approximation ratios for capacitated facility location problems.

Additionally, in Section \ref{sec:SCKF} we give the first fully polynomial time approximation scheme (FPTAS) for
the single-sink (hard) capacitated $k$-facility location problem.
In Section \ref{sec:CKFLU}, we give a {polynomial time} algorithm for the uniform capacitated $k$-facility location problem with a fixed number of clients.

\subsection{Related Work}
The $k$-facility location problem has already been studied since the early 90s \cite{CornuejolsNW,HsuLT}. It is a common generalization of the $k$-median problem (in which at most $k$ facilities are allowed to be opened, and there is no opening costs) and the uncapacitated facility location problem, which are classical problems in computer science and operations research and have a wide variety of applications in clustering, data mining, logistics \cite{BradleyFM,JainD,KuehnH}, even for the single-sink (single client) case \cite{HererRH}.

For the uncapacitated $k$-facility location problem (UKFL), Charikar et al. \cite{CharikarGTS} gave the first constant factor approximation algorithm with performance guarantee 9.8, by modifying their $6\tfrac{2}{3}$-approximation algorithm for the uncapacitated $k$-median problem.
Later, the approximation ratio was improved by Jain and Vazirani \cite{JainV}, who made use of a primal-dual scheme and Lagrangian relaxation techniques to obtain a $6$-approximation algorithm.
Jain et al. \cite{JainMMSV,JainMS} further improved the ratio to $4$ by using a greedy approach and the so-called Lagrangian Multiplier Preserving property of the algorithms. The best known approximation algorithm for this problem, due to Zhang \cite{Zhang}, achieves a factor of $2+\sqrt{3}+\epsilon$ using a local search technique.
The $k$-median problem, as a special case of UKFL, was studied extensively \cite{ArcherRS,AryaGKMMP,ByrkaPRST, CharikarG,CharikarGTS,JainMS,JainV,LiS} and the best known approximation algorithm was recently given by Byrka et al. \cite{ByrkaPRST} with approximation ratio $2.611+\epsilon$ by improving the algorithm of Li and Svensson \cite{LiS}.
In addition, Edwards \cite{Edwards} gave a $7.814$-approximation algorithm for the multi-level uncapacitated $k$-facility location problem by extending the $6\frac{2}{3}$-approximation algorithm by Charikar et al. \cite{CharikarGTS} for the uncapacitated $k$-median problem.

Unfortunately, the capacitated $k$-facility location problem is much less understood although the presence of capacity constraints
is natural in practice. The difficulty of the problem lies in the fact that two kinds of hard constraints appear together: the cardinality constraint, and the capacity constraints. This seems to result in hardness of the methods such as LP-rounding, primal-dual method used to solve the $k$-median problem, and even local search algorithms used to solve the capacitated facility location problem and the $k$-median problem.

The capacitated $k$-facility location problem is related to the capacitated facility location problem (CFL), whose inputs and goal are the same as for CKFL but without the cardinality constraint. Most known approximation algorithms for CFL are based on local search technique since the natural linear programming relaxation has an unbounded integrality gap for the general case \cite{PalTW}. For nonuniform capacities, P{\'a}l, Tardos, and Wexler \cite{PalTW} proposed the first constant factor approximation algorithm with a factor of $8.53$. Later, Mahdian and P{\'a}l \cite{MahdianP} improved this factor to $7.88$. Zhang, Chen, and Ye \cite{ZhangCY} reduced this factor to $(3+2 \sqrt{2}+\varepsilon)$ by introducing a multi-exchange operation.
The currently best known approximation algorithm, due to Bansal, Garg, and Gupta \cite{BansalGG}, achieves the approximation ratio $5$. As it was expected that the problem is easier for uniform capacities, Korupolu, Plaxton, and Rajaraman (KPR) \cite{KorupoluPR} gave the first constant factor approximation algorithm with a factor of $8$. Later, this factor was improved to $5.83$ by Chudak and Williamson \cite{ChudakW}. The currently best approximation algorithm due to Aggarwal et al. \cite{AggarwalLBGGGJ} has performance guarantee of 3.

Additionally, Levi, Shmoys, and Swamy \cite{LeviSS} showed that the linear programming relaxation has a bounded integrality gap for CFL with uniform opening costs, and gave a $5$-approximation algorithm for this case by an LP-rounding technique.

The capacitated $k$-median problem (CKM), which is a special case of CKFL, is already difficult to handle.
The natural linear programming relaxation has an unbounded integrality gap (see Remark \ref{unbounded gap of SCKFL}).
We have to blow up the capacity or increase the number of opening facilities by a factor of at least 2 if we use the cost of the LP solution as a lower bound to obtain an integral solution \cite{CharikarGTS}.

For the hard uniform capacity case, Charikar et al. \cite{CharikarGTS} gave a constant factor approximation algorithm while violating the capacities within a constant factor $3$ by LP-rounding. Recently, Byrka et al. \cite{ByrkaFRS} improved this violation ratio to $2+\epsilon$ by designing a $(32 l^2+28l+7)$-approximation algorithm increasing the capacity by a factor of $2+\frac{3}{l-1}$, $l\in \{2,3,4,\cdots\}$. Based on a local search technique, Korupolu et al. \cite{KorupoluPR} proposed a $(1+{5}/{\epsilon})$-approximation algorithm by using at most $(5+\epsilon)k$ facilities, and a $(1+{\epsilon})$-approximation algorithm by using at most $(5+5/\epsilon)k$ facilities.

For soft non-uniform capacities, based on primal-dual and Lagrangian relaxation methods, Chuzhoy and Rabani \cite{ChuzhoyR} presented a $40$-approximation algorithm by violating the capacities within a constant factor of $50$.
Bartal et al. \cite{BartalCR} proposed a $19.3(1+\delta)/\delta^2$-approximation algorithm ($\delta>0$) by using at most $(1+\delta)k$ facilities.

To the best of our knowledge, for hard non-uniform capacities, a constant factor approximation algorithm is still unknown
if we allow for a violation of the two kinds of hard constraints: the cardinality constraint and capacity constraints. Without violating any constraint, a constant factor approximation algorithm remains unknown even for the single-sink capacitated $k$-median problem in which $|D|=1$, let alone the capacitated $k$-facility location problem.

\subsection{Our Contributions and Techniques}

(\romannumeral1) The single-sink facility location problem has several applications in practice \cite{HererRH}. We show that the single-sink hard capacitated $k$-facility location problem, in which $D$ contains exactly one client, is NP-hard even when $f_i=0, i\in F$. We give the first FPTAS for SCKFL by extending the FPTAS for the knapsack problem. To the best of our knowledge, this is also the fist FPTAS for the single-sink capacitated facility location problem, which answers a question by G{\"o}rtz and Klose \cite{GortzK}.

(\romannumeral2) For the hard capacitated $k$-facility location problem with uniform capacities, in which $s_i=s, i\in F$,
we observe that for $|D|$=1, it is easy to find an optimal solution. A natural question is to extend this to any fixed number $m:=|D|$ of clients. We give a {polynomial time} algorithm for this setting that runs in time $O({\binom{n}{m}} \cdot n^3)$, where $n=|F|$. Using the structure of the graph consisting of {the fractional valued edges} in any extreme solution, the problem is reduced to a number of {transportation problems}.

(\romannumeral3) We observe that the number of fractionally open facilities can be bounded
by analyzing the rank of the constraint matrix corresponding to the tight constraints at a fractional extreme point solution.
Then, we give approximation algorithms for two variants of the hard capacitated $k$-facility location problem based on this upper bound.

Another example to show the potential power of the structure of extreme point solutions is that
we can slightly improve the previous best approximation ratio 5 obtained by Levi, Shmoys, and Swamy \cite{LeviSS}, and Bansal, Garg, and Gupta \cite{BansalGG} for the capacitated facility location problem with uniform opening costs, by combining our technique with a
pesudo-approximation algorithm for the $k$-median problem.

\section{The Single-sink Capacitated $k$-facility Location Problem}\label{sec:SCKF}
In this section, we consider the single-sink capacitated $k$-facility location problem (SCKF).
Since we only have one client with demand $d$, the formulation for the CKF is reduced to the following mixed integer program.

\begin{alignat}{2}
Z_{\mathrm{MIP}}=\min &\sum_{i\in F} {(c_i x_i+f_i y_i)} \label{MIP}\\
\text{subject to:\quad}&\sum_{i\in F}{x_{i}}  =  d, \label{demand constraint} \\
& \sum_{i\in F}{y_i}\leq k,\label{cardinality constraint} \\
&0\leq {x_{i}} \leq  s_i y_i,&&\forall i\in F,\label{capacity constraint} \\
&y_{i} \in \{0,1\},&&\forall i\in F.\label{binary constraint}
\end{alignat}

Again, the natural LP relaxation of SCKFL can be obtained by replacing constraints (\ref{binary constraint}) by (\ref{binary relaxation}).
\begin{lemma}
The single-sink capacitated $k$-facility location problem is NP-hard even when $f_i=0$ for all $i\in F$.
\end{lemma}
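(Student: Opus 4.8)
The plan is to prove NP-hardness of SCKF with all $f_i=0$ by reducing from a classical NP-hard problem whose structure mirrors the single-client capacitated setting. With zero opening costs and a single client of demand $d$, an instance of SCKF is essentially: pick a subset $S\subseteq F$ with $|S|\le k$ and fractional amounts $x_i\in[0,s_i]$ for $i\in S$ summing to $d$, minimizing $\sum_{i\in S} c_i x_i$. The natural source problem is therefore the \textbf{minimum knapsack} / \textbf{subset-sum}-type problem, or more directly \textbf{Partition} (equivalently \textsc{Subset-Sum}), since the hard part is the interaction between the cardinality bound $k$ and the capacities $s_i$.

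First, I would reduce from \textsc{Partition}: given positive integers $a_1,\dots,a_n$ with $\sum_i a_i = 2B$, decide whether some subset sums to $B$. Construct an SCKF instance with $F=\{1,\dots,n\}$, capacities $s_i=a_i$, demand $d=B$, cardinality bound $k=n$ (so the cardinality constraint is vacuous), and all $f_i=0$. The service costs $c_i$ are where the construction must do its work: set $c_i$ so that the cheapest way to route $B$ units of demand forces an \emph{integral} choice of facilities, i.e.\ each chosen facility is used at full capacity. A clean way is to make all $c_i$ equal to a common value $c$; then every feasible solution has cost exactly $cB$, which is uninformative — so instead one should make the costs reward using facilities ``fully.'' Concretely, I would add a phantom structure: one could take two ``sides'', or more simply reduce from \textsc{Equal-Cardinality Partition} or use a gadget where selecting facility $i$ fractionally is strictly suboptimal. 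The cleanest route: reduce from \textsc{Subset-Sum} with a twist that makes the objective detect exact feasibility — e.g.\ add one extra facility of huge capacity and large per-unit cost, so that the optimum equals $0$ (or a fixed threshold) iff a subset of the $a_i$ sums \emph{exactly} to $d=B$, otherwise the expensive facility must absorb the slack and the cost jumps.

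So the concrete plan: facilities $1,\dots,n$ with $s_i=a_i$, $c_i=0$, $f_i=0$; one extra facility $0$ with $s_0 = 2B$ and $c_0 = 1$; demand $d=B$; and $k=n+1$. Any feasible solution routes some amount $t\ge 0$ through facility $0$ at cost $t$, and the remaining $B-t$ through facilities $1,\dots,n$ subject to $x_i\le a_i$. The optimum is $0$ iff we can route all $B$ units through facilities $1,\dots,n$; but since their total capacity is $2B\ge B$ this is always possible fractionally, so this naive version fails — which pinpoints the real obstacle: \textbf{the LP relaxation of single-client problems with zero opening cost is trivial, so hardness must come entirely from integrality of the $y_i$, hence from a setting where fractional $x_i$ below capacity is penalized.} The fix is to force $x_i\in\{0,s_i\}$, which one achieves by making each facility's cost \emph{per unit} decreasing-in-use — impossible in this linear model — OR by reducing from a problem where the relevant quantity is already ``use facility $i$ fully or not at all.'' This is exactly \textsc{Subset-Sum} read as: choose $y_i\in\{0,1\}$ with $\sum_i a_i y_i = d$; i.e.\ set $s_i=a_i$, $c_i$ arbitrary (say $0$), $f_i=0$, demand $d$, $k=n$, and observe that a feasible solution exists \emph{at all} iff the \textsc{Subset-Sum} instance is a yes-instance, because feasibility requires $\sum_i x_i = d$ with $0\le x_i\le a_i$ — again always feasible fractionally.

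The genuine obstacle, then, is that I must introduce the cardinality bound $k$ to do the work: take $k$ small. Reduce from \textsc{Subset-Sum}: integers $a_1,\dots,a_n$, target $d$, and ask whether some $k$ of them sum to $d$ (the \textsc{$k$-Subset-Sum} / \textsc{Exact $k$-Cover}-style variant, which is NP-hard). Set $F=\{1,\dots,n\}$, $s_i = a_i$, $f_i=0$, $c_i$ chosen so the cheapest routing of $d$ units using at most $k$ facilities equals a fixed value iff some $k$-subset has capacity exactly $\ge d$ in a way that is tight — and here one adds a uniform tiny cost plus a coarse scale so that using fewer-than-full capacity on the last facility is detectably more expensive. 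I would finalize by: (1) stating the reduction from \textsc{$k$-Subset-Sum} (or \textsc{Partition} with an added cardinality constraint); (2) showing a yes-instance gives a feasible SCKF solution of cost $\le$ threshold; (3) showing any SCKF solution of cost $\le$ threshold, by an exchange/rounding argument on the at most $k$ opened facilities, yields the desired subset. The main difficulty I anticipate is step (3): arguing that a low-cost fractional-$x$ solution can be assumed (or converted) to one using each opened facility at full capacity, so that the chosen $y$-support directly witnesses the combinatorial object — this is where the costs $c_i$ and the threshold must be engineered carefully, and it is the crux of the whole proof.
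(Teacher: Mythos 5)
Your proposal correctly diagnoses the central obstacle --- that with a single client, zero opening costs, and linear service costs, fractional routing is always easy, so the reduction must engineer the per-unit costs $c_i$ so that a cheap solution is forced to use every opened facility at full capacity --- but it never actually overcomes it. Each concrete construction you write down you then (correctly) discard as failing, and your final plan ends with ``the costs $c_i$ and the threshold must be engineered carefully, and it is the crux of the whole proof.'' That crux is precisely the content of the paper's proof, and it is missing from yours. The paper's choice is $c_i := 1 - \tfrac{1}{s_i}$ (with $s_i > 1$, $f_i = 0$). Then for any feasible solution,
\begin{equation*}
\sum_{i\in F} c_i x_i \;=\; d - \sum_{i\in F} \frac{x_i}{s_i} \;\geq\; d - k,
\end{equation*}
since $x_i/s_i \leq y_i$ and at most $k$ of the $y_i$ equal $1$; equality holds iff every opened facility satisfies $x_i = s_i$, i.e., iff some $k$-element subset of the capacities sums exactly to $d$. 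Testing whether the optimum is at most $d-k$ for some $k\in\{1,\dots,n\}$ therefore decides \textsc{Subset-Sum}. Note that the mechanism is not a ``decreasing per-unit cost'' (which you correctly observe is impossible in a linear model); it is a per-unit cost that \emph{depends on the capacity} of the facility, so that each fully saturated facility contributes exactly one unit of discount, and the cardinality bound $k$ caps the total discount.

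A secondary issue: your final reduction source is ``\textsc{$k$-Subset-Sum}'' (some $k$ of the $a_i$ summing to $d$), which shifts the burden onto the cardinality constraint; the paper instead reduces from plain \textsc{Subset-Sum} and iterates over all $k\in\{1,\dots,n\}$, which is cleaner and avoids having to cite or prove hardness of the cardinality-constrained variant. As written, your argument is an honest account of the search for a reduction rather than a proof; to complete it you need to supply the concrete cost function and carry out the exchange argument in your step (3), which the identity above renders a one-line computation.
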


\begin{proof}
Consider the case that $s_i>1$, $c_i:=1-\frac{1}{s_i}$ and $f_i=0$ for all $i\in F$. We claim that
\begin{equation}\label{subsetsum}
Z_\mathrm{MIP}\leq d-k\iff \text{there exists $I\subseteq F$ with $|I|=k$ and $\sum_{i\in I}s_i=d.$}
\end{equation}
Indeed, for the objective value we find
\begin{equation*}
\sum_{i\in F}{c_{i} x_{i}}=d-\sum_{i\in F}{\frac{x_i}{s_i}}=d-\sum_{i\mid y_i=1}\frac{x_i}{s_i}\geq d-k,
\end{equation*}
where the last inequality holds because $x_i\leq s_i$ and $y_i=1$ for $k$ values of $i$.
Equality holds if and only if $x_i=s_i$ for all $i\in F$ with $y_i=1$.
That is, if and only if $\sum\{s_i\mid y_i=1\}=d$.

The claim above allows to reduce SUBSET-SUM to SCKFL as follows.
Let positive integers $s_1,\cdots, s_n>1$ and $d$ form an instance of SUBSET-SUM.
Now there {exists} a subset $I\subseteq \{1,2,\cdots,n\}$ such that $\sum_{i\in I}{s_i}=d$ if and only if the objective value of SCKFL is at most $d-k$ for some $k\in \{1,\cdots,n\}$.
\end{proof}

\begin{remark}\label{unbounded gap of SCKFL}
The integrality gap $Z_{\mathrm{MIP}}/Z_{\mathrm{LP}}$ is unbounded.
\end{remark}

Take the instance shown in Figure \ref{fig:instance} with four facilities $\{1,2,3,4\}$, $s_{1}=s_{2}=s, s_{3}=Ms, s_{4}=s+1$, $d=2s+1$, $f_{1}=f_{2}=f_{3}=f_{4}=0$, and $c_{1}=c_{2}=0, c_{3}=100, c_{4}=1, k=2$ and $M\gg s \gg 100$.
For this instance, we have $Z_{\mathrm{MIP}}=s+1$ and $Z_{\mathrm{LP}}=\frac{100M}{M-1}$.
Thus, $Z_{\mathrm{MIP}}/Z_{\mathrm{LP}}=\frac{s+1}{\frac{100M}{M-1}}>\frac{s+1}{200}$,
which can be arbitrarily large. In addition, a simple LP-rounding technique does not work for SCKFL.
For the above instance, an optimal solution for LP-relaxation is $y_{1}=1,y_2=\frac{Ms-s-1}{(M-1)s}, y_3=\frac{1}{(M-1)s}, x_1=s,x_2=\frac{Ms-s-1}{M-1}, x_3=\frac{M}{M-1}$.
A natural idea is to round $y_3$ to be $1$, $y_2$ to be $0$.
It is clear that the objective value of the solution obtained by this simple rounding is still really large.
\begin{figure}[!htb]
\centering
\includegraphics[totalheight=1.0in]{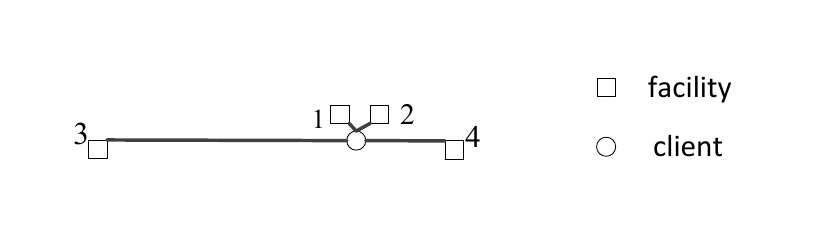}
\caption{\textsl{An instance for SCKFL. An optimal solution for the LP-relaxation of this instance is $y_{1}=1,y_2=\frac{Ms-s-1}{(M-1)s}, y_3=\frac{1}{(M-1)s}, x_1=s,x_2=\frac{Ms-s-1}{M-1}, x_3=\frac{M}{M-1}$, with the total cost $\frac{100M}{M-1}$. An optimal solution for the MIP is $y_1=y_4=1, x_{1}=s, x_{4}=s+1$ with the total cost $s+1$. }}
\label{fig:instance}
\end{figure}

We aim to design a fully polynomial time approximation scheme (FPTAS) for SCKFL.
Before introducing our algorithm, we present a key observation
(P{\'a}l, Tardos, and Wexler gave a similar observation in the proof of Lemma 3.3 in \cite{PalTW}).

\begin{observation}
For the single-sink capacitated $k$-facility location problem, there is an optimal solution $(x^*,y^*)$
in which at most one open facility $t^0$ is not fully used, i.e., $x^*_{i}\in\{0, s_{i}\}$ for $i\neq t^0$.
\end{observation}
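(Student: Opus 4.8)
The plan is to start from an arbitrary optimal solution $(x^*, y^*)$ and apply a local rerouting argument to eliminate all but one fractionally-used open facility. Say a facility $i$ with $y^*_i = 1$ is \emph{fractional} if $0 < x^*_i < s_i$. Suppose there are two fractional facilities $t^1$ and $t^2$ (if there is at most one, we are done). Without loss of generality assume $c_{t^1} \le c_{t^2}$. Both facilities have slack: we can increase $x_{t^1}$ up to $s_{t^1}$ and decrease $x_{t^2}$ down to $0$. So we shift $\delta = \min\{s_{t^1} - x^*_{t^1},\ x^*_{t^2}\} > 0$ units from $t^2$ to $t^1$. This keeps the demand constraint \eqref{demand constraint} satisfied, keeps the capacity constraints \eqref{capacity constraint} satisfied (since we never exceed $s_{t^1}$ and never go below $0$), does not change any $y_i$ (so \eqref{cardinality constraint} still holds), and changes the objective by $(c_{t^1} - c_{t^2})\delta \le 0$. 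Hence the new solution is also optimal.

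After this exchange, either $x_{t^1}$ has become $s_{t^1}$ (so $t^1$ is now fully used and no longer fractional) or $x_{t^2}$ has become $0$ (so $t^2$ is now empty, hence not fractional — note a facility with $x_i = 0$ satisfies $x_i \in \{0, s_i\}$). In either case the number of fractional facilities drops by at least one. Repeating this process at most $|F|$ times yields an optimal solution with at most one fractional facility, which we call $t^0$. For every other open facility $i \ne t^0$ we then have $x^*_i \in \{0, s_i\}$, which is exactly the claimed property. One can phrase this cleanly as: among all optimal solutions, pick one minimizing the number of fractional facilities; if that number is at least two, the exchange above produces an optimal solution with strictly fewer, a contradiction.

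There is essentially no serious obstacle here — the argument is a short exchange/averaging argument. The only points requiring a little care are: (a) checking that $\delta > 0$ strictly, which holds because $t^1$ being fractional gives $s_{t^1} - x^*_{t^1} > 0$ and $t^2$ being fractional gives $x^*_{t^2} > 0$; (b) observing that a facility becoming empty ($x_i = 0$) genuinely counts as "fully used" in the sense of the statement $x^*_i \in \{0, s_i\}$, so we may leave such facilities open without violating \eqref{cardinality constraint}; and (c) noting that we never need to open a new facility, so the cardinality constraint is automatically preserved. The opening-cost term $\sum_i f_i y_i$ is untouched throughout since the set of open facilities is unchanged, so only the service cost matters in the comparison, and it can only decrease.

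Finally, I would remark that the same reasoning underlies Lemma 3.3 of P\'al, Tardos, and Wexler \cite{PalTW}, and that this observation is what lets us restrict attention to solutions parametrized by the (small) set of facilities that are opened and fully used plus a single partially-used facility — the structural fact that the subsequent FPTAS will exploit by treating the problem as a variant of knapsack.
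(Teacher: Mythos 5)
Your proof is correct, and it is essentially the argument the paper has in mind: the paper states the Observation without proof, deferring to the analogous exchange argument in Lemma~3.3 of P\'al, Tardos, and Wexler, which is exactly the flow-shifting step you carry out (including the correct handling of the case where a facility empties out and of strict positivity of $\delta$).
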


Without loss of generality we suppose that $c_{ij}$ and $f_i$, for each $i\in F, j\in D$, are all integral.
Given $t^0$, which is allowed not to be fully used in an optimal integral solution $(x^*,y^*)$,
in order to solve SCKFL it is sufficient to solve the following problem for a given integer $p$:
\begin{equation}\label{goal}
\max \big\{\sum_{i\in F'}s_i \mid F'\subseteq F\setminus\{t^0\}, |F'|\leq k-1, \sum_{i\in F'}(c_is_i+f_i)=p\big\}.
\end{equation}
In words, we find for each total cost $p$ a set of at most $k-1$ facilities (not containing $t^0$) to open and use to full capacity, maximizing the total capacity.

We can recursively solve the above problem by dynamic programming. Without loss of generality, suppose $F\setminus \{t^0\}=\{1,2,\cdots,n-1\}$, where $n=|F|$. For nonnegative integers $p$ and $g\leq b\leq n-1$ define
\begin{equation*}
S_g(b,p):=\max \big\{\sum_{i\in F'}s_i\mid F'\subseteq \{1,\ldots, b\}, |F'|\leq g, \sum_{i\in F'}(c_is_i+f_i)=p\big\},
\end{equation*}
and let $F_g(b,p)$ be an optimal solution $F'$. If $\sum_{i\in F'}(c_is_i+f_i)=p$ does not hold for any $F'\subseteq \{1,\ldots, b\}$ with $|F'|\leq g$, we set $S_g(b,p):=-\infty$ and $F_g(b,p):=\emptyset$. Clearly, $S_g(0,0)=0$ and $S_g(0,p)=-\infty$ for $p>0$. The other values $S_g(b,p)$, and the corresponding optimum solutions $F_g(b,p)$, can be computed recursively since
\begin{equation*}
S_g(b+1,p)=\max \big(S_g(b,p),\ s_{b+1}+S_{g-1}(b,p-(f_{b+1}+c_{b+1} s_{b+1})\big)
\end{equation*}
for $0<g\leq b$. In the maximum, the two values correspond to not opening and opening facility $b+1$, respectively.

For computing the maximum in (\ref{goal}), it suffices to restrict to values $0\leq p\leq (k-1)\mathcal{P}\leq n\mathcal{P}$, where $\mathcal{P}=\max \{c_is_i+f_i\mid i\in \{1,\ldots, n-1\}\}$. Hence we can solve (\ref{goal}) in time $O(n^3 \mathcal{P})$.

Since $\mathcal{P}$ may be exponential in the size of the input of SCKFL, the computing time could be non-polynomial.
We overcome this difficulty by a scaling-and-rounding technique. The resulting Algorithm \ref{FPTAS} may be seen as a generalization of the FPTAS for the knapsack problem (with cardinality constraints) \cite{CapraraKPP,Lawler}.

\begin{assumption}
For each $i\in F, C_i>0$, where $C_i:=c_i s_i+f_i$.
\end{assumption}
Note that if $C_i=0$ and $s_i< d$ we directly open $i$ and serve demand $s_i$ of the single client by $i$ without increasing any cost. If $C_i=0$ and $s_i\geq  d$, the optimal total cost is 0.

\begin{algorithm}[!htlp]
\caption{An FPTAS for the single sink capacitated $k$-facility location problem}\label{FPTAS}
\begin{description}[leftmargin=*]
\item[Input] Finite set $F$ of facilities, costs $c\in \Z_{\geq 0}^F$, costs $f\in \Z_{\geq 0}^F$, demand $d\in \Z_{\geq 0}$, capacities $s\in \Z_{\geq 0}^F$, integer $1\leq k\leq n:=|F|$, $\epsilon>0$.
\item[Output] A feasible solution $(x,y)$ that is within a factor $1+\epsilon$ of optimum, if a feasible solution exists.
\item[Description]\mbox{}\\
\begin{itemize}
\item[1.] Order facilities such that $C_1\leq C_2\leq \cdots \leq C_n$, where $C_i:=c_i s_i+f_i$.
\item[2.] for $t=1$ to $n$ do
\item[]  \quad for $r=1$ to $n$ do
\item[]    \quad\quad if $\{1,\cdots,r\}-\{t\}=\emptyset$,
\item[]    \quad\quad\quad Let $C_{r}^0=0,$ $\bar{f}_t={f_t}$, $\bar{c}_t={c_t}$, $W=1$.
\item[]    \quad\quad end if
\item[] \quad\quad if $\{1,\cdots,r\}-\{t\}\neq \emptyset$
\item[]    \quad\quad\quad Let $W=\frac{\epsilon C_{r}^0 }{k}$, where $C_{r}^0=\max \{C_i \mid i\in\{1,\cdots,r\}-\{t\} \}$.
\item[]    \quad\quad\quad For each facility $i\in \{1,\cdots,r\}- \{t\}$, define $\bar{C}_i=\lfloor\frac{C_i}{W}\rfloor$.
\item[]    \quad\quad\quad Let $\bar{f}_t=\frac{f_t}{W}$, $\bar{c}_t=\frac{c_t}{W}$.
\item[]    \quad\quad end if
\item[]    \quad\quad Consider the subproblem $P_{rt}$ involving items $\{1,\cdots,r\}\cup \{t\}$, in which
\item[]      \quad\quad only $t$ can be not fully used, that is, $x_i\in \{0,s_i\}, i\in \{1,\cdots,r\}-\{t\}$;
\item[]      \quad\quad$0\leq x_t \leq s_t$. With the above scaled costs, compute $S_g(\bar{r},p)$ for each
\item[]      \quad\quad $0\leq g\leq k-1$, $0\leq p\leq (k-1) \lfloor \frac{C_{r}^0}{W} \rfloor$, where $\bar{r}=r$ if $r\neq t$, $\bar{r}=r-1$
\item[]      \quad\quad otherwise. Then, find a solution with total scaled cost:
\item[]     \quad\quad  $\min\{p+(d-S_g(\bar{r},p))\bar{c}_t+\bar{f}_t \mid 0\leq d-S_g(\bar{r},p)\leq s_t,0\leq g\leq k-1,$
\item[]      \quad\quad\quad\quad $0\leq p\leq (k-1)\lfloor \frac{C_{r}^0}{W} \rfloor \},$ if a feasible solution exists.
\item[]  \quad end for
\item[] end for
\item[3.] for $r=1$ to $n$ do
\item[]    \quad if $s_r\geq d$,
\item[]    \quad\quad find a solution with total cost $d c_r+f_r$.
\item[]    \quad end if
\item[] end for
\item[4.] Output the solution with the minimum total original cost.
\end{itemize}
\end{description}
\end{algorithm}
\begin{remark}
Note that in Algorithm \ref{FPTAS} for nonnegative integers $p$ and $g\leq b\leq \bar{r}$,
$$S_g(b,p):=\max \big\{\sum_{i\in F'}s_i\mid F'\subseteq \{1,\ldots, b\}-\{t\}, |F'|\leq g, \sum_{i\in F'}(c_is_i+f_i)=p\big\}.$$
\end{remark}
\begin{theorem}
Let $OPT$ be the cost of an optimal solution, $SOL$ be the cost of the solution returned by Algorithm \ref{FPTAS}.
Then, $SOL\leq (1+\epsilon)OPT$. The running time of Algorithm \ref{FPTAS} is $O(\frac{n^6}{\epsilon})$, for any $\epsilon>0$.
\end{theorem}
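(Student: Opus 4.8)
The plan is to prove the two assertions separately: for the $(1+\epsilon)$ bound I will exhibit one iteration $(t,r)$ of the double loop in Algorithm~\ref{FPTAS} whose subproblem $P_{rt}$ records a feasible solution of cost at most $(1+\epsilon)OPT$, and for the running time I will multiply the loop sizes by the size of the dynamic‑programming table. For the first part I would fix an optimal solution $(x^*,y^*)$ of the structured form guaranteed by the Observation: at most one open facility $t^0$ has $x^*_{t^0}\notin\{0,s_{t^0}\}$, and, after closing any open‑but‑unused facility (which only lowers the cost), every other open facility $i$ is used to capacity and hence contributes exactly $C_i=c_is_i+f_i$ to $OPT$. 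Let $F^*$ be this set of fully used facilities; if $|F^*|\le 1$ the instance is essentially a single‑facility one and is covered by Step~3, so assume $F^*\neq\emptyset$ and $t^0\notin F^*$. The decisive choice is $t=t^0$ and $r=\max F^*$ in the order $C_1\le\dots\le C_n$ from Step~1: because facility $r\in F^*$ is used to capacity, $C_r\le OPT$, so by the sorting $C^0_r=\max\{C_i\mid i\in\{1,\dots,r\}\setminus\{t^0\}\}\le C_r\le OPT$ and the scaling parameter is small, $W=\epsilon C^0_r/k\le(\epsilon/k)\,OPT$. Moreover $F^*\subseteq\{1,\dots,r\}\setminus\{t^0\}$, $|F^*|\le k-1$, and $\bar C_i=\lfloor C_i/W\rfloor\le\lfloor C^0_r/W\rfloor$, so $F^*$ is an admissible subset for the dynamic program of $P_{rt^0}$ and its rounded cost $p^*:=\sum_{i\in F^*}\bar C_i$ lies in the swept range $0\le p^*\le(k-1)\lfloor C^0_r/W\rfloor$.

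Then I would track $OPT$ through the table and the rounding. At $g=|F^*|$, $p=p^*$ the value obeys $S_g(\bar r,p^*)\ge\mathrm{cap}(F^*):=\sum_{i\in F^*}s_i=d-x^*_{t^0}$, so the residual demand $d-S_g(\bar r,p^*)$ routed through $t^0$ is at most $x^*_{t^0}\le s_{t^0}$; granting the lower bound $d-S_g(\bar r,p^*)\ge 0$ (see below), this cell passes the feasibility test, with scaled cost at most $p^*+x^*_{t^0}\bar c_{t^0}+\bar f_{t^0}$. Un‑scaling by $W$ inflates a scaled cost by at most an additive $(k-1)W$ — only the $\le k-1$ rounded full‑cost terms were truncated, each by less than $W$, while $W\bar c_{t^0}=c_{t^0}$ and $W\bar f_{t^0}=f_{t^0}$ are exact — so the solution recorded for this iteration (the scaled‑cost minimiser over $(p,g)$) has original cost strictly less than $W\bigl(p^*+x^*_{t^0}\bar c_{t^0}+\bar f_{t^0}\bigr)+(k-1)W=Wp^*+c_{t^0}x^*_{t^0}+f_{t^0}+(k-1)W$. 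Finally $Wp^*=\sum_{i\in F^*}W\bar C_i\le\sum_{i\in F^*}C_i$, so this is at most $\sum_{i\in F^*}C_i+c_{t^0}x^*_{t^0}+f_{t^0}+(k-1)W=OPT+(k-1)W\le(1+\epsilon)OPT$, i.e.\ $SOL\le(1+\epsilon)OPT$.

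The hard part will be the lower bound $d-S_g(\bar r,p^*)\ge 0$: the dynamic program maximises capacity, so a priori the best subset in the cost bucket $p^*$ could over‑cover the demand, and $t^0$ has no negative amount to absorb the excess. The fix I would pursue is a trimming/extremal argument: if a witness $\hat F$ with $\sum_{i\in\hat F}\bar C_i=p^*$ has $\mathrm{cap}(\hat F)>d$, discard its lowest‑capacity facilities while the remaining capacity stays $\ge d$, ending at $\hat F''$ whose minimum‑capacity element $t''$ satisfies $\mathrm{cap}(\hat F'')-s_{t''}<d\le\mathrm{cap}(\hat F'')$; then $\bigl(\hat F''\setminus\{t''\},\,t''\bigr)$ is a structured feasible solution realisable in the iteration $(t'',\max\hat F'')$, and the same bookkeeping together with $c_{t''}\bigl(d-\mathrm{cap}(\hat F''\setminus\{t''\})-s_{t''}\bigr)\le 0$ keeps its cost below $(1+\epsilon)OPT$. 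Applying this to a cost‑minimal structured solution that is realisable over all iterations should close the gap; the delicate point is to arrange the descent so that it terminates and the $(k-1)W$ rounding error is not re‑incurred along the way.

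For the running time, Step~1 costs $O(n\log n)$, Step~2 runs $n\cdot n$ iterations, and inside each one the scaled costs are at most $C^0_r/W=k/\epsilon\le n/\epsilon$, so the table $S_g(b,p)$ ranges over $g\in\{0,\dots,k-1\}$, $b\in\{0,\dots,\bar r\}$ and $p\in\{0,\dots,(k-1)\lfloor C^0_r/W\rfloor\}$, that is $O(n)\cdot O(n)\cdot O(n^2/\epsilon)=O(n^4/\epsilon)$ entries, each filled in $O(1)$ from the stated recurrence and with the concluding minimisation over $(p,g)$ no more costly. Hence Step~2 takes $O(n^6/\epsilon)$, dominating Steps~3 and~4, and the total running time is $O(n^6/\epsilon)$.
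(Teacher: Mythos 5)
Your argument is, in its essentials, the same as the paper's own proof. You pick the same witness iteration ($t=t^0$ together with $r$ equal to the largest index among the fully used open facilities, so that $C^0_r\le OPT$ and $W=\epsilon C^0_r/k\le \epsilon\,OPT/k$), you charge the floor-rounding of the at most $k-1$ fully used facilities for an additive error of at most $(k-1)W\le\epsilon\,OPT$ while the $t^0$-terms are scaled exactly, and you count $O(n^2)$ subproblems times an $O(n^3\cdot k/\epsilon)$ table to reach $O(n^6/\epsilon)$. All of this matches the paper step for step. (One small slip: the case $|F^*|=1$ with $t^0$ also carrying demand is a two-facility solution and is not covered by Step~3; but it is covered by your main argument, so the carve-out is simply unnecessary.)

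The one place you diverge is the ``hard part'' you flag, and you are right that it is a genuine issue rather than a formality. Since $S_g(\bar r,p)$ records only the maximum-capacity subset in each scaled-cost bucket, a bucket whose maximizer over-covers $d$ is rejected by the test $0\le d-S_g(\bar r,p)$ even though it may contain the subset $F^*$ the analysis needs, so the iteration is not a priori guaranteed to return the optimum of the scaled subproblem $P_{rt}$. The paper does not resolve this: its proof simply asserts that the returned solution is ``optimal in this iteration'' and hence no more expensive in scaled cost than $(x^*,y^*)$, which tacitly presupposes exactly the point you are worrying about. Your trimming idea is the right kind of repair --- note that any over-covering witness $\hat F$ already satisfies $\sum_{i\in\hat F}C_i\le W\bigl(p^*+|\hat F|\bigr)\le Wp^*+(k-1)W\le OPT+\epsilon\,OPT$, so serving all of $d$ from $\hat F$ alone is a $(1+\epsilon)$-approximate structured solution; what remains is to show that it, or a trimmed descendant, is actually recovered at its claimed cost by some iteration $(t'',r'')$, whose own scaling parameter $W''$ need not equal $W$. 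You leave that descent incomplete, so your proof does not fully close on this point --- but it is precisely the point the published proof leaves silently unaddressed, so this is a gap you have uncovered rather than one you have introduced.
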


\begin{proof}
Suppose $(x^*,y^*)$ is an optimal solution in which at most one open facility is not fully used.
Let $t^0$ be the open facility in $(x^*,y^*)$ that is not fully used if it exists.
Otherwise, let $t^0$ be some open facility in $(x^*,y^*)$.
Then, we define $F^*_1=\{i\in F \mid y^*_i=1\}-\{t^0\}$ as the set of opened facilities in $(x^*,y^*)$ excluding $t^0.$

If $F^*_1=\emptyset$, clearly our algorithm can find an optimal solution in Step 3.
If $F^*_1\neq \emptyset$, let $C_{i}^0=\max \{C_i \mid i\neq t^0, y^*_i=1\}$. Note that $C_{i}^0\leq OPT$.
Moreover, let $i^0=\max \{i\in F^*_1 \mid C_i=C_{i}^0\}$. Thus, $C_{i^0}=C_{i}^0.$

Suppose in iteration $t=t^0,r=i^0$ of Step 2 we get an optimal solution $(x,y)$.
Let $F_1=\{i\in F \mid y_i=1\}-\{t^0\}.$
Let $Cost(x,y)$ and $Scaled\_cost(x,y)$ be the original and scaled total cost of solution $(x,y)$ respectively.
So, $Cost(x,y)=(\sum_{i\in F_1}{C_i})+x_{t^0} c_{t^0}+f_{t^0}$,
and $Scaled\_cost(x,y)=(\sum_{i\in F_1}{\bar{C}_i})+x_{t^0} \bar{c}_{t^0} +\bar{f}_{t^0},$
where the definition of $\bar{C}_i$ is given in Algorithm \ref{FPTAS}.
We will show that $Cost(x,y)\leq (1+\epsilon)OPT$, which then implies $SOL\leq (1+\epsilon)OPT$.

Recall that $W=\frac{\epsilon C_{r}^0}{k}$. We have
\begin{eqnarray*}
Cost(x,y)&&=(\sum_{i\in F_1}{C_i})+x_{t^0} c_{t^0}+f_{t^0}\\
&& \leq (\sum_{i\in F_1}{(W \bar{C}_i+W)})+W x_{t^0} \bar{c}_{t^0}+W \bar{f}_{t^0}\\
&& \leq W((\sum_{i\in F_1}{\bar{C}_i})+x_{t^0} \bar{c}_{t^0} +\bar{f}_{t^0})+kW\\
&& \leq {W} \cdot Scaled\_cost(x,y) + kW,
\end{eqnarray*}
where the second inequality holds as $|F_1|\leq k-1$.

The scaled total cost of solution $(x^*,y^*)$ in this iteration is
$(\sum_{i\in F^*_1}{\bar{C}_i})+x^*_{t^0} \bar{c}_{t^0}+\bar{f}_{t^0}$.
Clearly,
\[
Scaled\_cost(x,y)\leq (\sum_{i\in F^*_1}{\bar{C}_i})+x^*_{t^0} \bar{c}_{t^0}+\bar{f}_{t^0},
\] since $(x,y)$ is optimal in this iteration. That is,
\[
Scaled\_cost(x,y)\leq (\sum_{i\in F^*_1}{\lfloor\frac{C_i}{W}\rfloor})+x^*_{t^0} \frac{c_{t^0}}{W}+\frac{f_{t^0}}{W}.
\]

Then, we have
\[
{W}\cdot Scaled\_cost(x,y)\leq {W} (\sum_{i\in F^*_1}{\lfloor\frac{C_i}{W}\rfloor})
+{W} x^*_{t^0} \frac{c_{t^0}}{W}+{W}\frac{f_{t^0}}{W}
\]
\[
\Rightarrow {W} \cdot Scaled\_cost(x,y) + kW \leq (\sum_{i\in F^*_1}{C_i})+ x^*_{t^0} {c_{t^0}}+ {f_{t^0}}+kW.
\]

Therefore, we get
\[
Cost(x,y) \leq (\sum_{i\in F^*_1}{C_i})+ x^*_{t^0} {c_{t^0}}+ {f_{t^0}}+kW= OPT+\epsilon C_{i}^0\leq (1+\epsilon)OPT,
\]
where the equality holds by the definition of $W$ and the last inequality holds as $C_{i}^0\leq OPT$.

For fixed $t$, the running time of the subproblem $P_{rt}, r=1,\cdots,n$ is $O(n^3 \lfloor\frac{C_{r}^0}{W}\rfloor)$.
That is, $O(n^3 \frac{k}{\epsilon})$. Thus, the total running time of our algorithm is $O(\frac{n^6}{\epsilon})$
as we have $O(n^2)$ subproblems.
\end{proof}

\section{The Capacitated $k$-facility Location Problem with Uniform Capacities}\label{sec:CKFLU}
In this section, we aim to show the following result for the capacitated $k$-facility location problem with uniform capacities (CKFU). Let $m=|D|$, $n=|F|$ and $s_i=s, i\in F$.
\begin{theorem}\label{CKP with uniform capacity}
For fixed $m$, the capacitated $k$-facility location problem with uniform capacities can be solved in polynomial time $O({\binom{n}{m}} \cdot n^3)$.
\end{theorem}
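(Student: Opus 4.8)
The plan is to exploit the classical structure theorem for extreme points of transportation-type polyhedra: once the set of open facilities is fixed, CKFU reduces to a transportation problem, and any basic (extreme) solution of a transportation problem has a forest support. Concretely, suppose we enumerate which facilities are open — but of course there are too many subsets of $F$ to enumerate all of them directly, so the first observation should be that at an extreme point of the LP relaxation only a bounded number of the $y_i$ can be fractional. Counting tight constraints at a vertex $(x,y)$: the variables number $nm + n$, and the binding constraints must have rank $nm + n$; the demand constraints contribute $m$, and each strictly-positive $x_{ij}$ or fractional $y_i$ must be "paid for" by additional tight rows (capacity constraints, cardinality constraint, or bounds $y_i=0/1$). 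A careful rank count — essentially the same bookkeeping as in part (\romannumeral3) of the paper — shows that the number of fractionally open facilities in any extreme solution is $O(m)$, in fact at most some explicit constant depending only on $m$ (I would expect roughly $m$ or $2m$).

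Given that, the algorithm is: guess the (at most $m$, or $O(m)$) fractionally open facilities and, separately, observe that in an extreme solution the bipartite graph on $D$ and the open facilities formed by the edges with $0 < x_{ij} < s$ (the fractional-flow edges) is a forest, so it has $O(m)$ edges and therefore touches only $O(m)$ facilities. Hence in an extreme optimal solution the "interesting" facilities — those that are fractionally open, or serve some client only partially — number $O(m)$; all other open facilities are fully saturated (each carries exactly $s$ units, split among clients in an integral-looking pattern) and all other facilities are closed. So I would enumerate the $O(m)$-sized set $T$ of interesting facilities (there are $\binom{n}{m}^{O(1)}$, i.e. polynomially many for fixed $m$, such sets), guess for each client in $D$ how much total demand it draws from facilities outside $T$ (a bounded amount of combinatorial data, since those contributions come in "full buckets" of size $s$), and then solve the residual routing problem. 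After fixing $T$ and these discrete choices, what remains is precisely a transportation problem — minimize $\sum c_{ij}x_{ij}$ subject to the (now fixed) demands at clients, capacities $s$ at the chosen facilities, and the count of additional full facilities to open outside $T$ at cost $\min f$-values — which is solvable exactly in polynomial time, e.g. by a min-cost flow computation in $O(n^3)$.

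The main work, and the step I'd expect to be the real obstacle, is making the enumeration honest: I must argue that the optimum is attained at an extreme point of the (MIP-induced) polyhedron once the open set is fixed, so that the forest/rank structure genuinely applies, and that the number of discrete "guesses" needed to pin down the residual transportation instance is polynomial in $n$ for fixed $m$ — in particular that the partial-service pattern and the number of saturated outside facilities are each determined by $O(m)$ bits of choice, giving the stated $O(\binom{n}{m}\cdot n^3)$ bound (the $\binom{n}{m}$ factor coming from choosing which facilities play the special roles, the $n^3$ from solving each resulting transportation problem). The remaining steps — verifying feasibility of each guess, and checking that taking the best over all guesses yields a global optimum — are routine: every feasible integral solution is dominated by one of the enumerated configurations because we can always move to an extreme point of the fixed-support transportation polytope without increasing cost, and that extreme point falls into one of the enumerated cases. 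I would also handle the trivial degeneracies (e.g. $k \ge n$, or instances with no feasible solution because $ks < \sum_j d_j$) up front.
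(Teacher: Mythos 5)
Your overall plan is the same as the paper's: take an optimal solution that is a vertex of the relevant polytope, observe that the support of the edges with $0<x_{ij}<s$ is a forest touching only $O(m)$ facilities, enumerate that ``interesting'' part, and reduce the remainder to transportation problems solvable in $O(n^3)$. However, there is a genuine gap at the crux of the counting argument. You assert that because the fractional-flow subgraph is a forest ``it has $O(m)$ edges and therefore touches only $O(m)$ facilities.'' This does not follow: a forest on $|D|=m$ clients and $t$ facilities can have up to $t+m-1$ edges, so the forest property alone puts no bound on $t$ (e.g.\ a star with one client and $n$ facility leaves, each edge carrying a small fraction of $s$, is a forest). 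The bound $t\leq m$ requires the additional structural fact (the paper's Lemma~\ref{properties for untight graph}(b)) that in each connected component of the fractional-flow subgraph at most one facility is under-saturated; every other facility in that subgraph carries exactly $s$ units split over edges each strictly less than $s$, hence has degree at least $2$, and a rooting argument then charges each such facility to a distinct client child. Without this, the set $T$ you enumerate is not of size $O(m)$ and the $\binom{n}{m}$ enumeration collapses.

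A second, less fatal, problem is your opening rank-counting argument about fractionally open $y_i$ at an extreme point of the LP relaxation. That structure belongs to the paper's approximation results (Lemma~\ref{CKPfractional}) and is the wrong object here: the LP relaxation has an unbounded integrality gap (Remark~\ref{unbounded gap of SCKFL}), so its optimal vertices tell you nothing about the optimal integral solution. For the exact algorithm all $y_i$ are $0$--$1$, and the forest/saturation structure must be derived for a vertex of the integer hull (equivalently, a basic optimal solution of the transportation polytope obtained by fixing the open set), which you only acknowledge at the end as ``the real obstacle'' without resolving it. Finally, your enumeration of ``how much each client draws from outside $T$'' in buckets of $s$ can be made to work (only $O(m)$ buckets per client can come from inside $T$), but the paper's cleaner route is to enumerate the combinatorial structure of the forest together with the choice of under-saturated facilities and then observe (Lemma~\ref{unique weight for untight graph}) that the edge weights on it are \emph{uniquely determined}, after which the residual demands are automatically multiples of $s$ and the divisible instance is a balanced transportation problem.
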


We need new notation to describe our idea. We consider an optimal solution $(x,y)$ for CKFLU as a weighted bipartite graph $G=(V,E)$, where $V=\{i\in F \mid y_i=1\}\cup D$ and $E=\{\{i,j\} \mid x_{ij}>0,i\in F, j\in D\}$. To be more precise, if $x_{ij}>0$, we add an edge $\{i,j\}$ between facility $i$ and client $j$ with weight $x_{ij}$. Moreover, let $\bar{E}=\{\{i,j\}\in E \mid 0< x_{ij}<s\}$ and $\bar{V}=\bigcup_{e\in \bar{E}}{e}$. We call $(\bar{V},\bar{E})$ the \emph{untight} weighted subgraph of $G$.

Define $r_j:=d_j/s$ for all $j \in D$. If all $r_j$ are integral, we say that the CKFLU is \emph{divisible}.

\begin{lemma}\label{demand is multiple of capacity}
The divisible capacitated $k$-facility location problem with uniform capacities can be solved in $O(n^3)$ time.
\end{lemma}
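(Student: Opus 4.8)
The plan is to show that, when every $r_j=d_j/s$ is an integer, some optimal solution has an \emph{empty} untight subgraph, i.e.\ $x_{ij}\in\{0,s\}$ for all $i\in F,j\in D$; once this structural fact is in hand, the remaining optimization is a single transportation problem solvable in $O(n^3)$ time.

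\textbf{Step 1: making the solution integral on the edges.} First I would take an arbitrary optimal solution $(x,y)$ and analyze its untight subgraph $\bar G=(\bar V,\bar E)$. If a client $j\in\bar V$ is incident to $a$ tight edges (each of weight $s$) and to $b\geq 1$ untight edges, then the untight edges at $j$ carry total weight $d_j-as=(r_j-a)s$, which is positive and a multiple of $s$; since every untight edge has weight strictly less than $s$, this forces $b\geq 2$. Also, a facility incident to an untight edge cannot be incident to a tight edge (the two together would route more than $s$ through it), so such a facility's whole load comes from untight edges and, if it is a leaf of $\bar G$, its load is strictly below $s$. Hence, whenever $\bar E\neq\emptyset$, the bipartite graph $\bar G$ either contains a cycle or, being a forest in which every client vertex has degree $\geq 2$, contains a path joining two facility-leaves. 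Along such a cycle/path I would alternately add and subtract a scalar $\delta$ to the edge weights: every client constraint and every internal facility constraint is preserved exactly, while the two path endpoints are underfull facilities that can absorb $\pm\delta$. The total cost is affine in $\delta$, so I choose the sign of $\delta$ that does not increase it and push $\delta$ until some edge weight reaches $0$ or $s$; the result is again optimal and has strictly smaller $|\bar E|$. Iterating, and then closing any open-but-unused facility, yields an optimal solution in which $x_{ij}\in\{0,s\}$ for all $i,j$, each open facility is filled to capacity by exactly one client, and exactly $R:=\sum_{j\in D}r_j$ facilities are open.

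\textbf{Step 2: reduction to a transportation problem.} Since each facility has capacity $s$ and the total demand equals $Rs$, the instance is feasible if and only if $R\leq\min(k,n)$, which is trivial to check. When feasible, Step 1 shows it suffices to pick $R$ facilities together with an assignment that gives client $j$ exactly $r_j$ of them, minimizing $\sum_i (f_i+s\,c_{i\sigma(i)})$ over the picked facilities $i$ with images $\sigma(i)$. I would encode this as a minimum-cost flow on a transportation network: a source supplying $R$ units, an arc of capacity $1$ and cost $f_i$ from the source to each facility $i$, an arc of capacity $1$ and cost $s\,c_{ij}$ from $i$ to each client $j$, and an arc of capacity $r_j$ and cost $0$ from $j$ to a sink. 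All costs are nonnegative and the constraint matrix is that of a transportation problem, hence an integral minimum-cost flow exists; in it every facility carries $0$ or $1$ unit, so setting $y_i=1$ exactly when $i$ carries flow and $x_{ij}=s$ exactly when the arc from $i$ to $j$ is used recovers a feasible CKFLU solution whose cost equals the optimum found in Step 1.

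\textbf{Step 3: running time, and the hard part.} The network has $O(n)$ vertices (if $m>n$ then $R>n$ and infeasibility has already been reported) and $O(n^2)$ arcs, and the total flow value is $R\leq n$; solving it by successive shortest augmenting paths needs at most $n$ augmentations, each a Dijkstra computation with reduced costs on a dense graph in $O(n^2)$ time, for a total of $O(n^3)$. The hard part will be Step 1: one has to verify that the augmentation step is always executable (the chosen sign of $\delta$ really admits a positive push up to an edge boundary, and the two path endpoints genuinely have slack) and that the process halts — which it does because $|\bar E|$ is a strictly decreasing, nonnegative, integer quantity. Everything after Step 1 is the standard fact that a transportation polytope has integral vertices together with a routine complexity count.
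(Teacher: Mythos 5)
Your proof is correct, and at its core it is the same reduction the paper uses: the divisible instance is a transportation problem, which has integral optima computable in $O(n^3)$. The differences are in packaging. Where you spend Step 1 proving by hand (via $\pm\delta$ augmentation along cycles and facility-leaf-to-facility-leaf paths of the untight subgraph) that some optimal solution has $x_{ij}\in\{0,s\}$ everywhere, the paper gets this for free: it divides all capacities and demands by $s$, adds a dummy client of demand $n-\sum_j r_j$ to balance the instance, folds the opening cost into the per-unit cost $sc_{ij}+f_i$ (with cost $0$ to the dummy), and invokes the integrality of transportation polytopes (Hoffman--Kruskal). Your augmentation argument is sound --- the alternating signs preserve all client totals and all internal facility totals, the two leaf endpoints have slack in the needed directions, and $|\bar E|$ strictly decreases --- but it is essentially a re-derivation of that classical integrality fact, and it overlaps with the paper's Lemma~\ref{properties for untight graph}. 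For the algorithmic step, the paper splits each client into $r_j$ unit-demand copies and runs minimum-weight perfect matching (Gabow) in $O(n^3)$, while you run successive shortest paths on a min-cost flow network with $R\le n$ augmentations of $O(n^2)$ each; these are equivalent in both correctness and complexity. One small point in your favour: you make the feasibility check $R\le\min(k,n)$ and the handling of the cardinality constraint (exactly $R$ facilities open, so $\sum_i y_i\le k$ is automatic) fully explicit, which the paper treats more tersely.
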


\begin{proof}
We transform the divisible CKFLU to a balanced transportation problem, in which the total capacity is equal to total demand.
Then, to get an integer solution to this transportation problem, we can consider this problem as a minimum weight perfect matching problem that can be solved in $O(n^3)$ time \cite{Gabow}, by splitting the demands.
Since the problem is infeasible if $k< \sum_{j\in D}{r_j}$, we only consider the case: $|F|\geq k \geq \sum_{j\in D}{r_j}$.

By dividing the capacity and demand constraints by $s$, we can get an equivalent formulation for the divisible CKFLU, in which the new capacity of each facility is $1$ and the new demand of each client $j$ is $r_j$.

First, we show that there exists an optimal integral solution for this equivalent formulation. We add a dummy client $j^\prime$ to $D$ with demand $r_{j^\prime}=n-\sum_{j\in D}{r_j}$.
Take the cost of shipping one unit from $i\in F$ to $j\in D\setminus \{j'\}$ to be $s c_{ij}+ f_i$, from $i\in F$ to $j'$ to be $0$.
Now the divisible CKFLU can be considered as a balanced transportation problem with total demand $n$. Since $r_j, j\in D$ are integers, there is an integer optimal solution for this transportation problem (see for instance \cite{HoffmanK}, or Theorem 21.14 in \cite{Schrijver}). Note that based on the optimal integer solution for this transportation problem, we can easily construct an optimal solution for our original problem.

Then, to get an optimal integer solution for the constructed transportation problem, we can split each $j\in D$ to $r_{j}$ copies each with demand $1$. Now we can consider the balanced transportation problem as a minimum weight perfect matching problem that can be solved in $O(n^3)$ time\cite{Gabow}.
\end{proof}

Note that if we know the exact structure of $(\bar{V},\bar{E})$, then according to the definition of $G$ the remaining part $(V,E\setminus \bar{E})$ can be generated by an optimal integer solution to an instance of the divisible CKFLU problem. Thus, the high-level idea is that we reduce our original problem to a collection of divisible CKFLU problems by checking all the possible structures of $(\bar{V},\bar{E})$. {To prove that we can examine all the structures in polynomial time, we show some useful properties of the untight weighted subgraph of $G$ first.}

\begin{lemma}\label{properties for untight graph}
Let $G=(V,E)$ be the graph corresponding to a vertex $(x,y)$ of the convex hull of feasible solutions of the MIP to CKFLU, and $H=(\bar{V},\bar{E})$ be its corresponding untight subgraph. Then,
\begin{itemize}
\item[(a)] $G$ is acyclic;
\item[(b)] in each connected component of $H$, there is at most one $i\in F\cap \bar{V}$ with $0<\sum_{j\in D}{x_{ij}}<s$;
\item[(c)] $H$ contains at most $m$ facilities and $2m-1$ edges.
\end{itemize}
\end{lemma}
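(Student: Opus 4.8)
The plan is to exploit the fact that $(x,y)$ is a vertex of the convex hull of feasible MIP solutions, so its support structure is determined by a full-rank system of tight constraints. First I would establish part (a): if $G$ contained a cycle, then since $G$ is bipartite the cycle alternates between facilities and clients, and one could push a tiny amount of flow $\pm\delta$ alternately around the cycle. Every edge on the cycle carries strictly positive weight $x_{ij}>0$ (by definition of $E$), so for small $\delta$ the perturbed solutions stay nonnegative; they also keep all demand constraints \eqref{eq:demand constraint} and capacity sums $\sum_j x_{ij}$ unchanged (the $+\delta$ and $-\delta$ at each vertex cancel), and $y$ is untouched, so feasibility is preserved. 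Then $(x,y)$ is the midpoint of two distinct feasible points, contradicting extremality. Hence $G$ is acyclic, i.e. a forest.

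Next, for part (b), consider a connected component $K$ of $H=(\bar V,\bar E)$ and suppose two distinct facilities $i_1,i_2\in F\cap K$ both satisfy $0<\sum_j x_{i_\ell j}<s$, so their capacity constraints \eqref{eq:capacity constraint} are \emph{not} tight and their $y_{i_\ell}$ are not at $0$ or $1$ either. Since $K$ is connected, take a path $P$ in $H$ from $i_1$ to $i_2$; every edge of $P$ lies in $\bar E$, hence is strictly between $0$ and $s$, so it can absorb a perturbation in either direction. Pushing $\pm\delta$ along $P$: the two endpoint facilities see their loads change by $\pm\delta$ (still feasible since those constraints were slack), every internal facility and every client on $P$ has its total unchanged, and again $y$ is untouched. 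For small $\delta$ both perturbed solutions are feasible and $(x,y)$ is their average — contradiction. So each component of $H$ has at most one facility whose load is strictly fractional; all other facilities in $\bar V$ are loaded to exactly $s$ (they cannot be loaded to $0$, else they would have no incident edge in $\bar E$).

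For part (c), I count using the forest structure from (a) together with (b). Each connected component of $H$ is a tree; if it has $a$ facility-vertices and $b$ client-vertices it has exactly $a+b-1$ edges. The key observation is that a facility vertex of $H$ that is loaded to \emph{exactly} $s$ must have all of its incident edges tight (weight $s$) — but edges of $\bar E$ have weight strictly less than $s$, so such a facility can be incident to at most\ldots no, more carefully: a fully-loaded facility incident to an untight edge must also be incident to a negative-weight edge to compensate, which is impossible. Hence \emph{every} facility in $\bar V$ has slack capacity, and by (b) there is at most one such facility per component, so each component has $a=1$. A tree with one facility and $b$ clients has $b$ edges, and the facility (with load in $(0,s)$) serves each of its $b$ clients fractionally. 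Summing over components: the number of facilities in $H$ is the number of components, call it $c$, and the number of edges is $\sum (1+b_\kappa-1)=\sum b_\kappa$, the total number of client-endpoints in $H$. Since every edge of $H$ has a client endpoint and there are only $m$ clients, and each client appears in at most one component (disjointness), we get $\sum b_\kappa \le m + (m-1)$ — the sharper bound coming from: total edges $=\sum b_\kappa = \sum(\text{edges in component }\kappa)$, and across all components the clients used number at most $m$ while $c\le m$, giving at most $2m-1$ edges; likewise $c\le m$ facilities.

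The main obstacle I anticipate is pinning down part (c) precisely: one must argue that no facility in $\bar V$ is fully loaded (so that (b) forces exactly one facility per component), and then do the tree-counting carefully so the constants come out as stated ($\le m$ facilities, $\le 2m-1$ edges) rather than something weaker. The perturbation arguments for (a) and (b) are routine once one notices that moving flow along a path or cycle inside $\bar E$ is always feasible for small $\delta$ because the binding at $y\in\{0,1\}$ and at the capacity bounds $x_{ij}\in\{0,s\}$ is exactly what defines which edges are \emph{not} in $\bar E$; the bookkeeping in (c) is where care is needed.
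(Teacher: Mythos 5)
Your parts (a) and (b) are exactly the paper's argument: perturb $x$ by $\pm\delta$ along the signed incidence vector of a cycle (resp.\ a path in $\bar{E}$ joining two slack facilities) and contradict extremality. Those are fine.

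Part (c), however, rests on a false claim. You assert that a fully loaded facility ($\sum_{j}x_{ij}=s$) incident to an untight edge ``must also be incident to a negative-weight edge to compensate,'' and conclude that every facility in $\bar{V}$ has slack capacity, hence (by (b)) exactly one facility per component of $H$. This is wrong: a facility with $x_{ij_1}=x_{ij_2}=s/2$ is fully loaded, both its edges lie in $\bar{E}$, and nothing is violated. Components of $H$ can and do contain many fully loaded facilities (only the \emph{slack} ones are limited to one per component by (b)). Your subsequent counting inherits this error --- if each component really had one facility you would get at most $m$ edges, not $2m-1$, and the step ``$\sum b_\kappa\le m+(m-1)$'' does not follow from anything you established. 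The correct observation, which you are missing, is the opposite direction: a facility $i\in\bar{V}$ with $\sum_j x_{ij}=s$ that is incident to an edge of weight $x_{ij}<s$ must have a \emph{second} neighbor $j'$ with $x_{ij'}>0$, and necessarily $x_{ij'}<s$ (else the capacity would be exceeded), so every fully loaded facility has degree at least $2$ in $H$. Rooting each tree component at the unique slack facility $i^*$ (or an arbitrary facility if none exists), every facility has at most one parent and hence at least one child, and children are clients not shared between facilities; therefore each component has at least as many clients as facilities, giving at most $m$ facilities in $H$, at most $2m$ vertices, and at most $2m-1$ edges since $H$ is a forest. You correctly anticipated that (c) was the delicate part, but the fix is not to show that no facility in $\bar{V}$ is fully loaded (which is unprovable because false); it is the degree/rooted-tree count above.
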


\begin{proof}
(a). Suppose that there is a cycle $O=(e_1,e_2,\ldots, e_{2p-1},e_{2p})$ in $G$. Note that $O$ must have even number of edges as $G$ is bipartite. Let $\chi^O\in \mathbb{R}^E$ be the signed incidence vector of this path:
\begin{eqnarray*}
&&\chi^O(e_i)=(-1)^i\text{ for $i=1,3\ldots,2p-1$}; \chi^O(e_i)=(1)^i\text{ for $i=2,4\ldots,2p$};\\
&&\chi^O(e)=0\text{ for } e\in E\setminus\{e_1,\ldots,e_{2p}\}.
\end{eqnarray*}

For sufficiently small $\epsilon >0$ both $(x+\epsilon \chi^O,y)$ and $(x-\epsilon \chi^O,y)$ are feasible solutions, contradicting the fact that $(x,y)$ is a vertex.

(b). The idea is similar to (a). Consider any connected component $B$ of $H$. Suppose for contradiction that we have two facilities $i_1,i_2$ in $B$ with $0<\sum_{j\in D}{x_{i_1 j}}<s, 0<\sum_{j\in D}{x_{i_2 j}}<s$. Since $B$ is connected, there is a path $P=(e_1,e_2,\ldots, e_{2p-1},e_{2p})$ from $i_1$ to $i_2$. Again, we can construct two feasible solutions $(x+\epsilon \chi^P,y)$ and $(x-\epsilon \chi^P,y)$, contradicting the fact that $(x,y)$ is a vertex.

(c). Consider any connected component of $H$ with at least one edge. Note that each component is a tree with $0<x_{ij}<s$ for each edge $\{i,j\}$. If there is a facility $i^*$ in this component with $0<\sum_{j\in D}{x_{i^* j}}<s$, then take $i^*$ as the root. Otherwise, take an arbitrary facility as the root. Since $0<x_{ij}<s$ for each edge $\{i,j\}$ and $\sum_{j\in D}{x_{ij}}=s$ for each facility $i\neq i^*$, each facility except $i^*$ has at least two neighbors. Then, each facility in this connected component has at least one child (client) as each facility has at most one parent. Moreover, no two facilities have a common child (by the definition of a rooted tree). Therefore, the number of facilities in each connected component is at most the number of clients. Thus, we have at most $m$ facilities in $H$ as there are at most $m$ clients. Clearly, the number of edges is at most $2m-1$ since $H$ is a forest.
\end{proof}

\begin{lemma}\label{unique weight for untight graph}
For any untight and acyclic subgraph $H=(\bar{V},\bar{E})$, given the set $I=\{i\in F\cap \bar{V} \mid 0<\sum_{j\in D}{x_{ij}}<s \}$, we can get the unique weight $x_{ij}$ for each edge $\{i,j\}\in \bar{E}$ in $O(m)$ time.
\end{lemma}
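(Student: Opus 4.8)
The plan is to recover all the weights by a single leaf‑to‑root sweep of the forest $H$, using that the weight of each edge is forced once the weights of the edges lying "below" it are known: at a facility endpoint by the capacity equation, and at a client endpoint by a congruence modulo $s$ coming from the demand constraint.

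Since every edge of $E$, hence of $\bar E$, joins a facility to a client and $H$ is acyclic, $H$ is a forest and each component $T$ is a tree containing at least one facility (no vertex of $\bar V$ is isolated). Fix a component $T$. By Lemma~\ref{properties for untight graph}(b) we may assume $T$ contains at most one facility of $I$; call it $i^*$ when it exists (if a non‑root facility of $I$ ever turns up below, the sweep simply fails to pin its parent edge and we report that no weighting exists). Root $T$ at $i^*$ if $i^*\in T$, and at an arbitrary facility of $T$ otherwise; then every non‑root facility of $T$ lies outside $I$.

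Now process the vertices of $T$ from the leaves towards the root, each vertex after all its children, and at every non‑root vertex $v$ determine the weight $x_{v\,\pi(v)}$ of the edge to its parent $\pi(v)$ from the already‑computed weights of the edges joining $v$ to its children. If $v$ is a facility $i\neq i^*$: since $0<\sum_{j\in D}x_{ij}\le s$ and any incident edge of weight $s$ would saturate the capacity of $i$ while $i$ still carries an untight edge, $i$ has no tight edge, so $\sum_{\{i,j\}\in\bar E}x_{ij}=s$ and $x_{i\,\pi(i)}=s-\sum_{\text{children }j}x_{ij}$. If $v$ is a client $j$: put $v_j:=\sum_{\{i,j\}\in\bar E}x_{ij}$; from $d_j=(\#\text{tight edges at }j)\cdot s+v_j$ we get $v_j\equiv d_j\pmod s$, while $0<x_{ij}<s$ on every edge of $\bar E$ at $j$ forces $v_j$ into the open interval $(\Sigma_j,\Sigma_j+s)$, where $\Sigma_j$ is the already‑known sum of the weights from $j$ to its children. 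An open interval of length $s$ contains at most one integer in any fixed residue class mod $s$, so $v_j$ is determined, and with it $x_{j\,\pi(j)}=v_j-\Sigma_j$. (A trivial induction shows every weight produced is a nonnegative integer, so this count is clean; if the required integer does not exist, or some computed weight leaves $(0,s)$, the structure admits no weighting.) Finally, at the root we check the one leftover constraint: $\sum_{\text{children}}x_{i^*j}\in(0,s)$ when the root is $i^*\in I$, and $\sum_{\text{children}}x_{\rho j}=s$ when the root is a facility $\rho\notin I$.

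Every quantity above is computed as the unique value compatible with the constraints, so any admissible weighting of $\bar E$ must coincide with the one produced; this yields both the feasibility test and the uniqueness asserted. For the running time, Lemma~\ref{properties for untight graph}(c) bounds $H$ by at most $m$ facilities and $2m-1$ edges, hence by $O(m)$ vertices (a forest with at most $2m-1$ edges and at most $m$ components), and the sweep spends $O(1+\deg_H(v))$ arithmetic operations at each vertex, for $O(m)$ in total. I expect the client step to be the only delicate point: one has to verify that $v_j\equiv d_j\pmod s$ together with $\Sigma_j<v_j<\Sigma_j+s$ leaves no freedom, and this is exactly why the component must be rooted at its unique deficient facility $i^*$ — doing so guarantees that every facility reached before the root obeys the \emph{exact} equation $\sum_{j}x_{ij}=s$ rather than merely $\le s$, which is what makes $\Sigma_j$, and then $v_j$, determined by the time we process $j$.
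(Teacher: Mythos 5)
Your proposal is correct and follows essentially the same route as the paper: root each tree of $H$ at its unique deficient facility $i^*$ (or an arbitrary facility if none exists), then sweep from the leaves to the root, fixing each client-to-parent weight as the residue of $d_j$ minus the children sum modulo $s$ and each facility-to-parent weight as $s$ minus the children sum. Your version merely makes explicit two points the paper leaves implicit (why every non-root facility in $\bar V\setminus I$ has its $\bar E$-incident weights summing to exactly $s$, and why the congruence together with the open-interval constraint leaves no freedom), so there is nothing substantive to add.
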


\begin{proof}
Consider any connected component of $H$. Note that each connected component must be in the form of a tree.
If there is a facility $i^*\in I$ in this component, then take $i^*$ as the root.
Otherwise, take an arbitrary facility $i^*$ in this component as the root.
Then, all leaves are clients since $\sum_{j\in D}{x_{ij}}=s$ for each facility $i\neq i^*$ in the considered connected component
 (Lemma \ref{properties for untight graph}(b)) and $0<x_{ij}<s$ for each edge $\{i,j\}$.

We will show that in each connected component, if node (client) $j$ is a leaf, we can obtain the exact value of $x_{ij}$, where $i$ is the father of $j$;
and for each other node in this tree, we can compute the value of the edge between this node and its father
based on the values of edges between this node and its children.
Then, we can obtain the values of all edges in the tree by induction.

Consider a client $j$. Let $f(j)$ be the father node (facility) of $j$ in the tree and $c(j)$ be the set of children (facilities) of $j$.
If $j$ is a leaf, that is $c(j)=\emptyset$, then we know $|\{i\in F \mid 0<x_{ij}<s\}|=1$. Otherwise, $j$ cannot be a leaf.
Thus, we can get the exact value for $x_{f(j),j}=d_j-\lfloor \frac{d_j}{s} \rfloor \cdot s$
since $j$ has exactly one father.
If $j$ is not a leaf, the value $x_{f(j),j}=(d_j-\sum_{i\in c(j)}{x_{ij}})-\lfloor \frac{d_j-\sum_{i\in c(j)}{x_{ij}}}{s} \rfloor \cdot s$ as $x_{tj}\in \{0,s\}, \forall t\in V\setminus \bar{V}$.

Consider a facility $i\neq i^*$. Let $f(i)$ be the father node (client) of $i$ in the tree and $c(i)$ be the set of children (clients) of $i$. We can obtain the value of $x_{i,f(i)}$ as long as all values of $x_{ij}, j \in c(i)$ are known, since $i$ must be fully used by Lemma \ref{properties for untight graph}. That is, $x_{i,f(i)}=s-\sum_{j\in c(i)}{x_{ij}}$. Note that if $i=i^*$, we can stop since $f(i^*)=\emptyset$.

Moreover, the computing time is $O(m)$ since each edge is only examined once.
\end{proof}

Consider an optimal integer vertex $(x,y)$ of the convex hull of feasible solutions for CKFLU whose corresponding graph $G=(V,E)$ is a forest.
The graph $H=(\bar{V},\bar{E})$ (the untight subgraph of $G$) can be viewed as a subgraph of some spanning tree of the complete bipartite graph $K_{\bar{F},D}$, where $\bar{F}=F\cap \bar{V}$. Consequently, checking all the possible structures of $H$ means checking all the subgraphs of these spanning trees. Note that $H$ and $K_{\bar{F},D}$ have the same vertices. Then, it now suffices to answer the following questions:
\begin{itemize}
\item[1.] how many different complete bipartite graphs do we have for $K_{\bar{F},D}$?
\item[2.] how to list all the spanning trees for a complete bipartite graph?
\item[3.] how many subgraphs, that have the same vertices as the considered spanning tree, does a spanning tree have?
\item[4.] for a fixed structure of $H$, how to compute the corresponding total cost?
\end{itemize}
If all the above questions can be solved in polynomial time, we can get all the possibilities of $H$ in polynomial time. Consequently, Theorem \ref{CKP with uniform capacity} can be proved by Lemma \ref{demand is multiple of capacity} and \ref{unique weight for untight graph}.

\paragraph{Proof of Theorem \ref{CKP with uniform capacity}.} Because $H=(\bar{V},\bar{E})$ contains at most $m$ facilities by Lemma \ref{properties for untight graph}, the number of all the possible cases for $K_{\bar{F},{D}}$ can be bounded by $\sum_{t=1}^{m}{\binom{n}{t}}\leq m \cdot {\binom{n}{m}}$. So, we can answer question 1.

Lemma \ref{list all spanning tree} and \ref{the number of spanning trees for bipartite graph} answer question 2. The time to list all the spanning trees for the complete bipartite graph is $O(m^{2m-2}+2m+m^2)$ since we have at most $m$ facilities and $m$ clients in $K_{\bar{F},{D}}$ by Lemma \ref{properties for untight graph}. Note that at this stage, we do not need to consider the weight $x_{ij}$ of edge $\{i,j\}$.

By Lemma \ref{properties for untight graph}, we know that the number of edges is at most $2m-1$ in a spanning tree. Thus, each spanning tree has at most $2^{2m-1}$ subgraphs that have the same vertices as the spanning tree. This answers question 3.

Then, the total time to list all the possible untight subgraphs is $O( m \cdot {\binom{n}{m}} \cdot (m^{2m-2}+2m+m^2) \cdot 2^{2m-1})$.

By Lemma \ref{unique weight for untight graph}, we can get the cost for any untight subgraph in polynomial time $O(m)$ as long as $I=\{i\in F\cap \bar{V} \mid 0<\sum_{j\in D}{x_{ij}}<s \}$ is fixed. {Note that the opening costs for facilities are easy to get if we know the structure of $H$. Indeed, it is $\sum_{i\in F\cap \bar{V}}{f_i}$}.  The remaining part $(V,E\setminus \bar{E})$ can be considered as an optimal integer solution to a divisible CKFLU, which means we can get the total cost in polynomial time $O(n^3)+O(m)$ by Lemma \ref{demand is multiple of capacity}. This answers question 4. Moreover, the number of all the choices for $I$ is bounded by $2^m$ since there are at most $m$ facilities in each spanning tree by Lemma \ref{properties for untight graph}.

Combining all the pieces together, we can get all the possibilities of solutions in computing time $O( m \cdot {\binom{n}{m}} \cdot (m^{2m-2}+2m+m^2) \cdot 2^{2m-1} \cdot 2^m \cdot (m + n^3))= O({\binom{n}{m}} \cdot (m^{2m-1}+2 m^2+m^3) \cdot 2^{3m-1} \cdot (m+n^3))$, that is, $O({\binom{n}{m}} \cdot n^3)$. Finally, we output the solution with at most $k$ open facilities and the smallest total cost.\qed

\begin{lemma}\label{list all spanning tree}
\emph{\cite{KapoorR}} For an undirected graph without weight $G=({V},{E})$, all spanning trees can be correctly generated in $O(N+|{V}|+|{E}|)$ time, where $N$ is the number of spanning trees.
\end{lemma}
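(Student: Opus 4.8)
The plan is to realize the enumeration by a backtracking scheme that recursively partitions the set of all spanning trees, and then to control its cost by an amortized analysis. Note first that $N$ may be exponential in $|V|$, so the content of the bound is that, after an $O(|V|+|E|)$ preprocessing, the algorithm spends only amortized $O(1)$ time per spanning tree, beating the $O(|V|+|E|)$ per tree that a naive recomputation would cost.

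I would first set up the recursion. Fix an arbitrary order $e_1,\dots,e_{|E|}$ on the edges and maintain two disjoint sets: $A$ of edges forced into the tree and $B$ of edges forced out. The invariant is that $A$ is acyclic and that the multigraph $G_{A,B}$ obtained from $G$ by contracting every edge of $A$ and deleting every edge of $B$ is connected (equivalently, $G$ still has a spanning tree $T$ with $A\subseteq T\subseteq E\setminus B$). At a node of the recursion, let $e$ be the first edge not yet in $A\cup B$. If $e$ is a self-loop of $G_{A,B}$ (its endpoints were already merged by contractions of $A$-edges), then no admissible spanning tree uses $e$, so move $e$ to $B$; if $e$ is a bridge of $G_{A,B}$, then every admissible spanning tree uses $e$, so move $e$ to $A$; otherwise recurse once with $e$ added to $A$ (contracting $e$) and once with $e$ added to $B$ (deleting $e$), both of which preserve the invariant. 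When $|A|=|V|-1$, the set $A$ is a spanning tree of $G$, and it is output.

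Correctness and the node count come next. Following, at each decision, the branch consistent with a fixed spanning tree $T$ leads to $T$ being output exactly once, and every leaf of the recursion is a spanning tree, so the leaves are in bijection with the $N$ spanning trees. Since we branch only when both the ``$e\in A$'' and ``$e\in B$'' subproblems still contain a spanning tree, every branching node has two nonempty subtrees; hence there are at most $N-1$ branching nodes, and therefore $O(N)$ branching nodes, leaves, and maximal chains of non-branching internal nodes.

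The efficiency argument is the crux and the main obstacle. One strips all bridges and self-loops of $G$ in a single depth-first traversal before the first branch, in $O(|V|+|E|)$. Thereafter $G_{A,B}$ is kept explicitly as a multigraph on the current super-vertices, updated incrementally --- a contraction on a forced-in edge, a deletion on a forced-out edge, each undone on backtracking --- and one needs the total cost of re-detecting bridges and self-loops after these updates, summed over the whole recursion tree, to be $O(N+|V|+|E|)$. Along a single root-to-leaf path this is immediate, since each contraction or deletion permanently shrinks the part of the graph visible below that node, so each vertex and edge of $G$ is consumed by at most one forced step and the path costs $O(|V|+|E|)$; the difficulty is that the same edges recur along the $\Theta(N)$ different root-to-leaf paths, so one must (i) restore the data structure on return without rescanning untouched portions of the graph, and (ii) maintain the bridge and self-loop information under incremental edge contraction and deletion, so that whatever extra work is incurred at a branch or a forced step can be charged either to an edge of $G$ or to a freshly output spanning tree. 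Carrying out this bookkeeping is exactly the technical core of \cite{KapoorR}; in a self-contained write-up I would either reproduce it or invoke a known incremental-connectivity structure and then recover the $O(N+|V|+|E|)$ total from the charging scheme above.
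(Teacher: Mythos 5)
The paper offers no proof of this lemma at all --- it is stated as a quoted result of \cite{KapoorR} --- so the only question is whether your sketch actually establishes the bound. It does not, and you say so yourself: the include/exclude branching with the bridge/self-loop forcing rules (essentially the Read--Tarjan scheme) correctly enumerates each spanning tree exactly once and bounds the recursion tree by $O(N)$ nodes, but that much only yields $O(N\cdot(|V|+|E|))$, since each node may still require a linear-time rescan to re-detect bridges and self-loops of the contracted multigraph. The entire content of the lemma is the reduction of this per-node cost to amortized $O(1)$, and your write-up explicitly defers that step to ``the technical core of \cite{KapoorR}'' or to an unspecified incremental-connectivity structure with an unspecified charging scheme. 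Deferring the one nontrivial claim to the reference being proved is a genuine gap, not a proof.

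Two further points you should address if you want a self-contained argument. First, the $O(N+|V|+|E|)$ bound cannot even accommodate writing each spanning tree down explicitly, which costs $\Theta(N\cdot|V|)$; the output must be a sequence of differences (edge exchanges) between consecutive trees, and your sketch outputs $A$ in full at each leaf. Second, the algorithm of \cite{KapoorR} is not organized as an include/exclude recursion on edges at all: it builds a computation tree whose \emph{nodes} (not just leaves) are spanning trees, with children generated from a parent by single edge swaps, and the $O(1)$ amortized cost comes from how candidate exchange edges are partitioned among children. So even where your sketch is rigorous, it describes a different algorithm from the one the lemma cites, and the amortization argument you would need is not the one you gesture at.
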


\begin{lemma}\label{the number of spanning trees for bipartite graph}
\emph{\cite{Scoins}} The number of spanning trees of a complete bipartite graph is $m^{n-1} n^{m-1}$, where $m$ and $n$ are respectively the cardinalities of two disjoint sets in this bipartite graph.
\end{lemma}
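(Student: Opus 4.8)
The statement to prove is the classical enumeration $\tau(K_{m,n}) = m^{n-1} n^{m-1}$, where I write $\tau(G)$ for the number of spanning trees of $G$. The plan is to derive it from Kirchhoff's Matrix--Tree Theorem by diagonalizing the Laplacian of $K_{m,n}$ explicitly. Write the vertex set as a disjoint union $A\cup B$ with $|A|=m$, $|B|=n$, and order coordinates with $A$ first. Every vertex of $A$ has degree $n$ and every vertex of $B$ has degree $m$, so the Laplacian is the block matrix
\[
L=\begin{pmatrix} n I_m & -J_{m\times n}\\ -J_{n\times m} & m I_n\end{pmatrix},
\]
where $I_\ell$ is the identity and $J_{\ell\times p}$ is the all-ones matrix. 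By the Matrix--Tree Theorem, $\tau(K_{m,n})$ equals $\frac{1}{m+n}$ times the product of the $m+n-1$ nonzero eigenvalues of $L$ (equivalently, any cofactor of $L$), so it suffices to determine the spectrum of $L$.

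The key step is to exhibit four families of eigenvectors and verify they span $\R^{m+n}$. The all-ones vector $\mathbf{1}_{m+n}$ spans the kernel (eigenvalue $0$). Any vector supported on $A$ whose coordinates sum to zero is annihilated by the off-diagonal block $-J_{n\times m}$ and is therefore an eigenvector with eigenvalue $n$; these form an $(m-1)$-dimensional subspace. Symmetrically, vectors supported on $B$ summing to zero give eigenvalue $m$ with multiplicity $n-1$. Finally the vector $(n\,\mathbf{1}_m,\,-m\,\mathbf{1}_n)$, orthogonal to all of the above, satisfies $L\,(n\,\mathbf{1}_m,\,-m\,\mathbf{1}_n)=(m+n)\,(n\,\mathbf{1}_m,\,-m\,\mathbf{1}_n)$ by a one-line computation, so $m+n$ is an eigenvalue. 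Since $1+(m-1)+(n-1)+1=m+n$ and these four families are pairwise orthogonal, they form an eigenbasis, and the multiset of nonzero eigenvalues of $L$ is exactly $\{\, n$ with multiplicity $m-1,\ m$ with multiplicity $n-1,\ m+n \,\}$.

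Plugging this into the Matrix--Tree Theorem gives
\[
\tau(K_{m,n})=\frac{1}{m+n}\cdot n^{m-1}\cdot m^{n-1}\cdot(m+n)=m^{n-1}n^{m-1},
\]
as claimed; the boundary cases $m=1$ or $n=1$, where $K_{m,n}$ is a star with a unique spanning tree, are consistent with the formula and can be checked directly. Honestly, this is a classical identity with no real obstacle: the one place that needs care is confirming that the spectrum is complete, i.e.\ that no eigenvalue has been missed, and this is settled entirely by the dimension count $1+(m-1)+(n-1)+1=m+n$ together with the orthogonality of the four eigenspaces.

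If a self-contained combinatorial proof is preferred to citing the spectral machinery, I would instead give a Pr\"ufer-type bijection: a spanning tree of $K_{m,n}$ is encoded by a pair consisting of a sequence over $A$ of length $n-1$ and a sequence over $B$ of length $m-1$ (these lengths arising because, in the usual Pr\"ufer sequence of a tree on $m+n$ vertices, the vertices of $A$ occupy $n-1$ positions and those of $B$ occupy $m-1$ positions, since each edge has exactly one endpoint in each part). Proving this map is a bijection onto $A^{n-1}\times B^{m-1}$ is more delicate than the Laplacian computation, but again yields the count $|A|^{n-1}|B|^{m-1}=m^{n-1}n^{m-1}$. I would present the Matrix--Tree argument as the main proof and mention the bijective version only as a remark.
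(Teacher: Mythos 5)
Your proof is correct and complete: the Laplacian eigenvalue computation is right (eigenvalue $0$ once, $n$ with multiplicity $m-1$ from mean-zero vectors on the $m$-side, $m$ with multiplicity $n-1$ from the other side, and $m+n$ from $(n\,\mathbf{1}_m,-m\,\mathbf{1}_n)$), the dimension count closes the spectrum, and the Matrix--Tree Theorem then yields $\frac{1}{m+n}\,n^{m-1}m^{n-1}(m+n)=m^{n-1}n^{m-1}$. There is nothing to compare against here: the paper states this as a known result of Scoins and gives no proof of its own, so your Kirchhoff-based derivation (or the Pr\"ufer bijection you sketch as an alternative) is a perfectly standard and valid way to supply one.
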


\section{The Hard Capacitated $k$-facility Location Problem with Non-uniform Capacities}
In this section, we show how to bound the number of fractionally open facilities by a simple rank-counting argument on an extreme point solution. Then, together with an algorithm to group clients, we give a simple constant factor approximation algorithm for the hard capacitated $k$-facility location problem with non-uniform capacities (CKFL) (with uniform opening cost) with approximation ratio $7+\epsilon$ by using at most $2k$ facilities. As a simple illustration of the techniques used, we first give a 2-approximation algorithm for the single-sink hard capacitated $k$-facility location problem (SCKFL). Note that this ratio is worse than that of the FPTAS in Section \ref{sec:SCKF}. Here we aim to show that this upper bound is helpful to design approximation algorithms. And the approach is totally different from the FPTAS.

\subsection{A Simple Illustration of Using the Structure of Extreme Point Solutions}
\noindent \textbf{The Structure of Extreme Point Solutions to SCKFL}

\begin{definition}
Let $Ax\leq a, Bx\geq b, Cx=c$ be a system of linear (in)equalities. For a feasible solution $z$ we define \emph{the rank at $z$} of the system to be the (row)rank of $\begin{bmatrix}A_z^\transp&B_z^\transp&C^\transp\end{bmatrix}^\transp$, where $A_zx\leq a_z, B_zx\geq b_z, Cx=c$ is the subsystem consisting of the (in)equalities that are satisfied with equality by $z$.
\end{definition}
Note that for two subsystems, the sum of the ranks at $z$ of those two subsystems is at least the rank at $z$ of their union.

Let $P$ be the set of feasible solutions to the system SCKFL-LP consisting of (\ref{binary relaxation}), (\ref{demand constraint}),(\ref{capacity constraint}) and $\sum_{i\in F}{y_i}=k$
(Note that in this section we consider constraint $\sum_{i\in F}{y_i}=k$ instead of the corresponding inequality (\ref{cardinality constraint})). That is,
$$P:=\{(x,y) : \text{SCKFL-LP}\},$$ where SCKFL-LP is a system of constraints given below:
\begin{alignat}{2}
&\sum_{i\in F}{x_{i}}  =  d,\quad \sum_{i\in F}{y_i}&& =  k,\label{SCKF-LP}\\
&0\leq {x_{i}} \leq  s_i y_i,&&\forall i\in F,\nonumber\\
&0\leq y_i\leq 1,&&\forall i\in F.\nonumber
\end{alignat}

\begin{lemma}\label{twofractional}
Let $(x,y)$ be a vertex of $P$. Then either $y$ is integer, or $y$ has exactly two noninteger components and for every $i\in F$ we have $x_i=0$ or $x_i=s_iy_i$.
\end{lemma}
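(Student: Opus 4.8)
The plan is to use the classical fact that $(x,y)$ is a vertex of $P$ exactly when the constraints of SCKFL-LP that are satisfied with equality at $(x,y)$ already determine $(x,y)$ uniquely; equivalently, the only direction $(u,v)\in\R^F\times\R^F$ along which all of those tight constraints remain satisfied (so that $(x\pm\epsilon u,\,y\pm\epsilon v)\in P$ for all sufficiently small $\epsilon>0$) is $(u,v)=0$. So the first step is to record what ``tight'' forces on $(u,v)$: the two equalities always give $\sum_{i\in F}u_i=0$ and $\sum_{i\in F}v_i=0$; a tight lower bound $x_i=0$ gives $u_i=0$; a tight coupling constraint $x_i=s_iy_i$ gives $u_i=s_iv_i$; and a tight bound $y_i=0$ or $y_i=1$ gives $v_i=0$. (We may assume each $s_i>0$, since a facility with $s_i=0$ forces $x_i=0$ and can be discarded.)

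Next I would count free parameters. Put $L=\{i\in F:0<y_i<1\}$ and $M=\{i\in F:0<x_i<s_iy_i\}$, and write $\ell=|L|$, $q=|M|$; note $i\in M$ implies $y_i>0$. For $i\notin L$ one of the bounds $0\le y_i\le1$ is tight, so $v_i=0$; for $i\notin M$ one of $x_i\ge0$, $x_i\le s_iy_i$ is tight, so $u_i$ equals $0$ or $s_iv_i$, hence is determined by $v_i$ (and equals $0$ when also $i\notin L$). Since the per-index constraints of SCKFL-LP involve only $x_i$ and $y_i$, every feasible direction is therefore described by the $\ell+q$ scalars $(v_i)_{i\in L}$ and $(u_i)_{i\in M}$, subject only to the two remaining relations $\sum_i v_i=0$ and $\sum_i u_i=0$. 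Consequently the space of feasible directions has dimension at least $\ell+q-2$; as $(x,y)$ is a vertex this space is trivial, so $\ell+q\le2$.

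The final step is a short integrality remark to decide which of the cases allowed by $\ell+q\le2$ actually occur. If exactly one component of $y$ were strictly between $0$ and $1$, then all others being integers in $\{0,1\}$ the equality $\sum_i y_i=k$ with $k\in\Z$ would force that component to be an integer too, a contradiction; hence $\ell\neq1$, i.e. $\ell\in\{0,2\}$. If $\ell=0$, then $y$ is integral, which is the first alternative. If $\ell=2$, then $\ell+q\le2$ forces $q=0$, which says precisely that $x_i\in\{0,s_iy_i\}$ for every $i\in F$, while $y$ has exactly two noninteger components --- the second alternative.

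I expect the only point needing care to be the independence claim in the second paragraph: that there is no relation among the coordinates of a feasible direction besides the two global sums. This reduces to the observations that the index-$i$ constraints of SCKFL-LP touch only $x_i,y_i$, that for $i\in L$ no bound on $y_i$ is tight, and that for $i\in M$ neither $x$-constraint of index $i$ is tight (so in particular $y_i>0$, and no bound on $y_i$ enters). Once these are in hand, the dimension bound $\ell+q-2$ is immediate and the rest is bookkeeping.
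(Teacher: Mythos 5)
Your proof is correct and is essentially the paper's argument in dual form: where the paper bounds the row rank of the tight constraints at $(x,y)$ from above by $2n+2-|F'|$ via a per-facility decomposition and compares with the required rank $2n$, you bound the dimension of the space of feasible directions (the kernel of that same tight system) from below by $\ell+q-2$ and compare with $0$, which is the complementary count. The independence claim you flag does hold for exactly the reasons you give, and the closing parity observation ruling out a single fractional component of $y$ matches the paper's, so there is no gap.
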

\begin{proof}
Let $F':=\{i\in F\mid 0<y_i<1\}$. If $|F'|=0$ we are done. As $|F'|=1$ is ruled out because the sum of the $y_i$ is $k, k\in \Z$, we may assume that $|F'|\geq 2$.

The rank of system SCKFL-LP at $(x,y)$ is equal to $2n, n=|F|$(Theorem 5.7 in \cite{Schrijver}). We partition the (in)equalities in this system and bound the rank at $(x,y)$ for each subsystem:
\begin{itemize}
\item The rank at $(x,y)$ of the subsystem $\sum_{i\in F} x_i=d,\sum_{i\in F} y_i=k$ is at most $2$.
\item For every $i\in F'$, the rank at $(x,y)$ of the subsystem $0\leq x_i, x_i\leq s_iy_i, 0\leq y_i, y_i\leq 1$ is at most $1$ and equality holds if and only if $x_i=0$ or $x_i=s_iy_i$.
\item For every $i\in F\setminus F'$, the rank at $(x,y)$ of the subsystem $0\leq x_i, x_i\leq s_iy_i, 0\leq y_i, y_i\leq 1$ is at most $2$ and equality holds if and only if $x_i=0$ or $x_i=s_iy_i$.
\end{itemize}
Since the rank is subadditive, we find that the rank at $(x,y)$ of SCKFL-LP is at most
\begin{equation}
2+|F'|+2|F\setminus F'|=2n+2-|F'|\leq 2n,\nonumber
\end{equation}
where the inequality holds as $|F'|\geq 2,$
with equality only if $|F'|=2$ and for each $i$ we have $x_i=0$ or $x_i=s_i y_i$.
\end{proof}

We give a 2-approximation algorithms for SCKFL to show the potential power of this nice structure.

\vspace*{8pt}
\noindent \textbf{2-Approximation Algorithm for SCKFL}

\vspace*{8pt}
We give an alternative approach to get an approximate solution for SCKFL, compared to the FPTAS in Section \ref{sec:SCKF}.
This approach can be viewed as incomplete implement of a branch and bound technique, branching on the 0-1 variables $y_i$.
To obtain a $2$-approximation algorithm that runs in polynomial time, we use two key ideas.
First, by Lemma \ref{twofractional}, we know in any vertex of the feasible region of the LP-relaxation that either 0 or 2 components of $y$ are fractional. We exploit this to guide the branching.
Secondly, we show that for a branch $y_i=1$ either there is no 2-approximation solution, or we can find a 2-approximation solution in polynomial time by again exploiting the structure of the vertices of the feasible region to the LP-relaxation.
A precise description of this algorithm is given in Algorithm \ref{2-approx}.

\begin{algorithm}[!htbp]
\caption{A $2$-approximation algorithm for the single-sink hard capacitated $k$-facility location problem}\label{2-approx}
\begin{description}[leftmargin=*]
\item[Input] Finite set $F$ of facilities, costs $c\in \Z_{\geq 0}^F$, costs $f\in \Z_{\geq 0}^F$, capacities $s\in \Z_{\geq 1}^F$, demand $d\in \Z_{\geq 1}$, integer $k\in \Z_{\geq 1}$.
\item[Output] A feasible solution $(x,y)$ to MIP: (\ref{MIP}),(\ref{capacity constraint}),(\ref{binary constraint}), and (\ref{SCKF-LP}),
 that is within a factor $2$ of optimum, if a feasible solution exists.
\item[Description]\mbox{}\\
\begin{itemize}
\item[1.] Find an optimal vertex $(x,y)$ of the feasible region of the LP-relaxation.\\
If no solution exists then stop. If $y$ is integer then return $(x,y)$ and stop.
\item[2.] Let $i_1\neq i_2$ in $F$ with $y_{i_1},y_{i_2}\in (0,1)$ and $s_{i_1}\geq s_{i_2}$.
\item[3.] Define $x^1$ by $x^1_{i_1}:=x_{i_1}+x_{i_2}$, $x^1_{i_2}:=0$ and $x^1_i:=x_i$ for $i\neq i_1,i_2$.\\
Define $y^1$ by $y^1_{i_1}:=1$, $y^1_{i_2}:=0$, $y^1_i:=y_i$ for $i\neq i_1,i_2$.
\item[4.] Recursively compute a 2-approximation solution $(x^0,y^0)$ for the restriction to $F\setminus\{i_1\}$ and extend it by setting $x^0_{i_1}:=0$ and $y^0_{i_1}:=0$.\\
\item[5.] Set $F_0:=\emptyset$. While $|F_0|\leq |F|-k$ do:
\begin{itemize}
\item[a.] Find an optimal vertex $(x',y')$ of the feasible region of the LP-relaxation intersected with $\{(x,y)\mid y_{i_1}=1, y_i=0\ \forall i\in F_0\}$.
\item[b.] If $y'$ is integer, return the best solution among $(x',y')$, $(x^0,y^0)$ and $(x^1,y^1)$ and stop.
\item[c.] If $x'_{i_1}=s_{i_1}$, return the best solution among $(x^0,y^0)$ and $(x^1,y^1)$ and stop.
\item[d.] Let $i_3\neq i_4$ in $F$ with $y'_{i_3},y'_{i_4}\in (0,1)$ and $f_{i_3}\leq f_{i_4}$.
\item[e.] Define $y''$ by $y''_{i_1}:=0$, $y''_{i_3}:=y''_{i_4}:=1$ and $y''_i:=y'_i$ for $i\neq i_1,i_3,i_4$.\\
If $(x',y'')$ has smaller value than $(x^0,y^0)$, set $(x^0,y^0)\leftarrow (x',y'')$.
\item[f.] Set $F_0\leftarrow F_0\cup \{i_4\}$.
\end{itemize}
\end{itemize}
\end{description}
\end{algorithm}

\begin{theorem}
For the single-sink hard capacitated $k$-facility location problem, Algorithm \ref{2-approx} finds a solution that is within a factor $2$ of optimum, or it concludes correctly that there is no feasible solution. The running time is polynomially bounded in the number $|F|$ of facilities.
\end{theorem}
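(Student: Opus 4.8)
The plan is to prove both statements by induction on $|F|$ with $k$ held fixed, using Lemma \ref{twofractional} as the structural engine. First I would dispose of the trivial exits of Step~1. If the LP is infeasible then so is the MIP: whenever the LP is feasible, $\sum_i s_iy_i\ge\sum_i x_i=d$ forces $\sum_{\text{top }k}s_i\ge d$, so opening the $k$ largest-capacity facilities is MIP-feasible; hence Step~1 reports infeasibility correctly. If the vertex $(x,y)$ found in Step~1 is integral — in particular whenever $|F|=k$ — then it has cost $Z_{\mathrm{LP}}\le OPT$ and is feasible, hence optimal. So assume $|F|>k$ and $(x,y)$ fractional. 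By Lemma \ref{twofractional}, $y$ has exactly two fractional coordinates $i_1,i_2$ and $x_i\in\{0,s_iy_i\}$ for all $i$; moreover $y_{i_1}+y_{i_2}=1$, since the remaining $y_i$ are integral and $\sum_i y_i=k\in\Z$. The same rank-counting argument applies to every LP solved in Step~5a (which forces $y_{i_1}=1$ and $y_i=0$ on $F_0$): its vertex $(x',y')$ is integral, or has exactly two fractional coordinates $i_3,i_4$ with $y'_{i_3}+y'_{i_4}=1$ and $x'_i\in\{0,s_iy'_i\}$ throughout.

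Next I would verify feasibility of every constructed solution. For $(x^1,y^1)$: $\sum_i y^1_i=k$ by $y_{i_1}+y_{i_2}=1$, and $x^1_{i_1}=x_{i_1}+x_{i_2}\le s_{i_1}y_{i_1}+s_{i_2}y_{i_2}\le s_{i_1}(y_{i_1}+y_{i_2})=s_{i_1}$ using $s_{i_1}\ge s_{i_2}$. For $(x',y'')$ of Step~5e: $\sum_i y''_i=k$ by $y'_{i_3}+y'_{i_4}=1$, and the only concern is $x'_{i_1}\le s_{i_1}y''_{i_1}=0$, which holds precisely because Step~5c did not fire, forcing $x'_{i_1}\ne s_{i_1}$ and hence $x'_{i_1}=0$ from $x'_{i_1}\in\{0,s_{i_1}\}$. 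The recursively produced $(x^0,y^0)$ is feasible by induction and extends to $F$ by appending a zero coordinate.

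The heart of the argument is the approximation guarantee, split on whether the optimum uses $i_1$. If some optimal $(x^*,y^*)$ has $y^*_{i_1}=0$, it is feasible for the instance on $F\setminus\{i_1\}$ (still opening $k$ facilities, as $|F|-1\ge k$), so $OPT(F\setminus\{i_1\})\le OPT$; by induction Step~4 returns $(x^0,y^0)$ with cost $\le 2\,OPT(F\setminus\{i_1\})\le 2\,OPT$, and $(x^0,y^0)$ is among the candidates in every return statement, so we are done. Otherwise every optimal $(x^*,y^*)$ has $y^*_{i_1}=1$, hence $OPT\ge f_{i_1}$, and the while loop must deliver. Call $F_0$ \emph{consistent} if $y^*_i=0$ for all $i\in F_0$; it is consistent initially. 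While it stays consistent, $(x^*,y^*)$ lies in the Step-5a polytope, so $\mathrm{cost}(x',y')\le OPT$. Then: if Step~5b fires, $(x',y')$ is integral with cost $\le OPT$; if Step~5c fires then $x'_{i_1}=s_{i_1}$, so $OPT\ge c_{i_1}s_{i_1}+f_{i_1}$, and from $\mathrm{cost}(x^1,y^1)=Z_{\mathrm{LP}}+(c_{i_1}-c_{i_2})x_{i_2}+(f_{i_1}-f_{i_2})y_{i_2}\le Z_{\mathrm{LP}}+(c_{i_1}s_{i_1}+f_{i_1})y_{i_2}\le 2\,OPT$ (using $y_{i_2}\le1$, $s_{i_2}\le s_{i_1}$, $Z_{\mathrm{LP}}\le OPT$) we see $(x^1,y^1)$ is a $2$-approximation; and if neither fires, Step~5f adds $i_4$. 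If $y^*_{i_4}=0$ the enlarged $F_0$ is still consistent and we iterate — since $|F_0|$ grows by one, this can continue only until $|F_0|=|F|-k$, where the Step-5a LP is forced integral in $y$ and, by consistency, feasible (witnessed by $(x^*,y^*)$), so Step~5b fires. If instead $y^*_{i_4}=1$ then $OPT\ge f_{i_4}$, and using $f_{i_3}\le f_{i_4}$ with $(1-y'_{i_3})+(1-y'_{i_4})=1$ the opening-cost accounting (the service cost being unchanged) gives $\mathrm{cost}(x',y'')\le \mathrm{cost}(x',y')+f_{i_4}-f_{i_1}\le 2\,OPT$; Step~5e records this into $(x^0,y^0)$ (or $(x^0,y^0)$ is already at least as good), so from then on $(x^0,y^0)$ is a $2$-approximation and stays among the returned candidates. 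In every case some returned solution is within a factor $2$ of $OPT$.

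For the running time: the recursion in Step~4 is a chain of depth at most $|F|-k$, and each invocation solves one LP in Step~1 plus at most $|F|-k+1$ LPs in the while loop, so the algorithm solves $O(|F|^2)$ linear programs and performs polynomially many further arithmetic operations. The main obstacle is organizing the Step-5 case analysis cleanly: isolating the right loop invariant (consistency of $F_0$, together with ``$(x^0,y^0)$ is already a $2$-approximation or $F_0$ is still consistent'') and checking that \emph{every} way the algorithm can terminate — including the degenerate branches where the recursive subinstance or a restricted LP turns out infeasible — returns a candidate set that contains a $2$-approximation.
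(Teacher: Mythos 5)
Your proposal is correct and follows essentially the same route as the paper's proof: Lemma \ref{twofractional} supplies the two-fractional structure, $(x^1,y^1)$ is bounded via $c_{i_1}s_{i_1}+f_{i_1}$, and the Step-5 analysis hinges on keeping $(x^*,y^*)$ feasible for the restricted LPs by showing $y^*_{i_4}=0$ unless $(x^0,y^0)$ is already a $2$-approximation. Your explicit loop invariant and the observation that $x'_{i_1}\in\{0,s_{i_1}\}$ forces $x'_{i_1}=0$ when Step 5c does not fire are just cleaner phrasings of what the paper argues by contradiction.
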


\begin{proof}
Notice that an optimal vertex of the feasible region of the LP-relaxation can be found in polynomial time (see for instance \cite{GrotschelLS}).
Furthermore, since the number of recursive calls is no more than $|F|-1$, the polynomial running time is evident. It now suffices to show that when the MIP: (\ref{MIP}),(\ref{capacity constraint}),(\ref{binary constraint}), and (\ref{SCKF-LP}) is feasible, the solution given by Algorithm \ref{2-approx} is within a factor two of optimum.

Clearly, if $y$ is integer in Step 1 of Algorithm \ref{2-approx}, then the output $(x,y)$ is an optimal feasible solution. Hence, by Lemma \ref{twofractional}, we may assume that $y$ has exactly two fractional components $y_{i_1}$ and $y_{i_2}$.
Then, we know $y_{i_1}+y_{i_2}=1$ since $\sum_{i\in F} y_i=k$, and all $y_i, i\in F$ are integer except $y_{i_1}$ and $y_{i_2}$.
Without loss of generality we can assume that $s_{i_1}\geq s_{i_2}$.

To see that $(x^1,y^1)$ defined in Step 3 of Algorithm \ref{2-approx} is indeed a feasible solution, it suffices to show that $x^1_{i_1}\leq s_{i_1}$. This follows directly from the fact that $s_{i_1}\geq s_{i_2}$, since
\begin{eqnarray*}
x^1_{i_1}=x_{i_1}+x_{i_2}&\leq& y_{i_1}s_{i_1}+y_{i_2}s_{i_2}\leq y_{i_1}s_{i_1}+y_{i_2}s_{i_1} =s_{i_1}.
\end{eqnarray*}
Further, we find an upper bound for the value of $(x^1,y^1)$,
\begin{equation}\label{xhatyhat}
c^\transp x^1+f^\transp y^1\leq (c^\transp x+f^\transp y)+(c_{i_1}s_{i_1}+f_{i_1}),
\end{equation}
which is at most the optimum plus $c_{i_1}s_{i_1}+f_{i_1}$.

To conclude the proof, we analyse Step 5 of Algorithm \ref{2-approx}. Observe that the initial solution $(x^0,y^0)$ may be replaced, but only by a better solution. Also observe, that the solution that is returned is always at least as good as $(x^0,y^0)$ and $(x^1,y^1)$. Hence, we may assume that $(x^0,y^0)$ (at the end of the algorithm) and $(x^1,y^1)$ are not $2$-approximations.
Let $(x^*,y^*)$ be an optimal solution. We have $y^*_{i_1}=1$, since otherwise $(x^0,y^0)$ would be a 2-approximation already at Step 4. It suffices to show that $(x^*,y^*)$ remains feasible throughout the iterations of Step 5, until a solution of the same value is returned in Step 5b. For this, we observe that while $(x^*,y^*)$ is feasible, the situation $x'_{i_1}=s_{i_1}$ as in Step 5c cannot occur, because otherwise, by (\ref{xhatyhat}), we would have
\begin{equation*}
c^\transp x^1+f^\transp y^1\leq c^\transp x+f^\transp y+(c_{i_1}s_{i_1}+f_{i_1})\leq c^\transp x+f^\transp y+ c^\transp x'+f^\transp y'\leq 2(c^\transp x^*+f^\transp y^*),
\end{equation*}
contradicting the fact that $(x^1,y^1)$ is not a $2$-approximation.

In Step 5d, the fact that $y'$ has exactly two fractional components follows from Lemma \ref{twofractional} as $y'$ is a vertex of a face of the feasible region of SCKFL-LP, and hence of that region itself. Observe that this implies that $y'_{i_3}+y'_{i_4}=1$, hence $(x',y'')$ defined in Step 5e is a feasible solution.

In Step 5f, we have $y^*_{i_4}=0$. Indeed, for the cost of $(x',y'')$ we find:
\begin{eqnarray*}
c^\transp x'+f^\transp y''&=&(c^\transp x'+f^\transp y')-f_{i_1}+(1-y'_{i_3})f_{i_3}+(1-y'_{i_4})f_{i_4}\\
&\leq& (c^\transp x'+f^\transp y')+f_{i_4} \leq (c^\transp x^*+f^\transp y^*)+f_{i_4}.
\end{eqnarray*}
Since $(x^0,y^0)$ and hence $(x',y'')$ is not a $2$-approximation, we find that $f_{i_4}> c^\transp x^*+f^\transp y^*$ and hence $y^*_{i_4}=0$. This shows that $(x^*,y^*)$ remains feasible after adding $i_4$ to $F_0$.
\end{proof}

\subsection{An Approximation Algorithm for CKFL with Uniform Opening Costs}\label{sec:CKFLNU_uniform_opening_costs}
In this section, we consider the capacitated $k$-facility location problem with uniform opening costs, i.e., $f_i=f, i\in F$. Since we have an upper bound on the number of fractionally open facilities based on Lemma \ref{CKPfractional} below, a {natural} idea is to design a constant factor approximation algorithm for CKFL by relaxing the cardinality constraint with a constant factor.
We give a simple algorithmic framework that can extend any $\alpha$-approximation algorithm for the (uncapacitated) $k$-median problem (UKM) to a $(1+2\alpha)$-approximation algorithm for CKFL using at most $2k$ facilities ($2k-1$ for uniform capacities).

The (uncapacitated) $k$-median problem (UKM) can be formulated as follows, where variable $x_{ij}$ indicates the fraction of the demand of client $j$ that is served by facility $i$, and $y_{i}$ indicates whether facility $i$ is open.
\begin{alignat*}{2}
\min\ &\sum_{i\in F}{\sum_{j\in D}{d_j c_{ij} x_{ij}}}\\
\text{subject to:\quad}&\sum_{i\in F}{x_{ij}}  =  1, &&\forall j\in D,\\
&x_{ij} \leq  y_i, &&\forall i\in F,\forall j\in D,\\
&\sum_{i\in F} y_i \leq  k,\\
&x_{ij}, y_{i} \in \{0,1\}, &&\forall i\in F,\forall j\in D.
\end{alignat*}

\vspace*{8pt}
\noindent \textbf{The Structure of Extreme Point Solutions to CKFL}

\vspace*{8pt}
Let $Q$ be the set of feasible solutions $(x,y)$ to the system CKFL-LP consisting of (\ref{eq:demand constraint}), (\ref{eq:capacity constraint}), (\ref{eq:cardinality constraint}), (\ref{eq:nonegative constraint}) and (\ref{binary relaxation}). That is, $$Q:=\{(x,y) : \text{CKFL-LP}\},$$ where CKFL-LP is a system of constraints given below:
\begin{align}
&\sum_{i\in F}{x_{ij}}=d_j,\quad \forall j\in D; \quad \sum_{i\in F} y_i\leq k, \nonumber \\
&\sum_{j\in D}{x_{ij}} \leq s_iy_i,\quad \forall i\in F,\nonumber\\
& x_{ij}\geq 0, \quad \forall i\in F, \forall j\in D,\nonumber\\
&0\leq y_i\leq 1,\quad \forall i\in F \nonumber.
\end{align}

\begin{lemma}\label{CKPfractional}
Let $(x,y)$ be a vertex of $Q$. Then $y$ has at most $m+1$ noninteger components, where $m=|D|$.
\end{lemma}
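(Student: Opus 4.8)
The plan is to carry out the same rank-counting argument used for Lemma~\ref{twofractional}, now over all $m=|D|$ clients simultaneously. Write $n=|F|$. The polytope $Q$ lives in $\R^{nm+n}$ (variables $x_{ij}$ for $i\in F,\,j\in D$, and $y_i$ for $i\in F$), so at a vertex $(x,y)$ the system CKFL-LP has rank $nm+n$ at $(x,y)$ (Theorem~5.7 in \cite{Schrijver}). First I would partition the constraints of CKFL-LP into: (i) the $m$ demand equalities $\sum_{i}x_{ij}=d_j$; (ii) the single cardinality inequality $\sum_i y_i\le k$; and (iii) for each $i\in F$, the ``local block'' consisting of the capacity inequality $\sum_{j}x_{ij}\le s_iy_i$, the $m$ nonnegativity inequalities $x_{ij}\ge 0$ ($j\in D$), and the two bounds $0\le y_i\le 1$. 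This is a genuine partition of all the rows, and since the rank at $(x,y)$ is subadditive over a partition into subsystems (the remark following the definition of rank at $z$, applied inductively), it suffices to bound the rank at $(x,y)$ of each block.

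Block (i) has rank at most $m$ and block (ii) at most $1$. For a facility $i$ with $y_i\in\{0,1\}$, the local block involves only the $m+1$ variables $y_i,x_{i1},\dots,x_{im}$, so its rank is trivially at most $m+1$. The crucial estimate is for a facility $i$ with $0<y_i<1$: here the two bound constraints on $y_i$ are both slack, so only the capacity inequality and the $m$ nonnegativity inequalities can be tight; and these $m+1$ inequalities cannot all be tight at once, because if $\sum_j x_{ij}=s_iy_i$ then $\sum_j x_{ij}=s_iy_i>0$, forcing $x_{ij}>0$ for some $j$ and hence making that $x_{ij}\ge 0$ slack. So at most $m$ of the $m+1$ local inequalities are tight, and the rank at $(x,y)$ of the local block is at most $m$. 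Letting $t$ be the number of noninteger components of $y$, summing the bounds over the partition gives
\[
nm+n \;\le\; m+1+mt+(m+1)(n-t)\;=\;(m+1)+(m+1)n-t,
\]
whence $t\le m+1$, which is the claim. I would present the per-block bounds as a short itemized list mirroring the proof of Lemma~\ref{twofractional} and close with this one-line arithmetic.

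The main obstacle is exactly the per-facility estimate for fractional $y_i$: obtaining rank $m$ rather than the naive $m+1$. Using only that the $y_i$-bounds are slack leaves $m+1$ candidate tight rows in $m+1$ variables, which yields $nm+n\le m+1+(m+1)n$ — vacuous. The saving comes solely from the incompatibility of a tight capacity constraint with all $m$ nonnegativity constraints being tight; this is the analogue of the ``$x_i=0$ or $x_i=s_iy_i$'' dichotomy in Lemma~\ref{twofractional} and is what drops the local rank by one whenever $y_i$ is fractional. Once that observation is in place, the rest is bookkeeping, and the cardinality inequality (tight or not) does not affect the count beyond its contribution of at most $1$.
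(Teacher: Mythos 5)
Your proof is correct and follows essentially the same rank-counting argument as the paper: the same partition of CKFL-LP into the global demand/cardinality constraints and per-facility blocks, the same bound of $m$ (rather than $m+1$) on the tight rows of a block with fractional $y_i$, and the same closing arithmetic. The only difference is cosmetic (you split the demand equalities and the cardinality constraint into two blocks and spell out why a tight capacity row forces some $x_{ij}>0$, which the paper leaves implicit).
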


\begin{proof}
The proof is similar to the proof of Lemma \ref{twofractional}. Let $F'=\{i\in F \mid 0<y_i<1\}$. The rank of system CKFL-LP at $(x,y)$ is equal to $(m+1)n, n=|F|,m=|D|$. We partition the (in)equalities in this system and bound the rank at $(x,y)$ for each subsystem:
\begin{itemize}
\item The rank at $(x,y)$ of the subsystem $\sum_{i\in F}{x_{ij}}=d_j, \forall j\in D;\sum_{i\in F} y_i\leq k$ is at most $m+1$.
\item For every $i\in F'$, the rank at $(x,y)$ of the subsystem $\sum_{j\in D}{x_{ij}} \leq s_iy_i; 0\leq x_{ij}, j\in D; 0\leq y_i; y_i\leq 1$ is at most $m$ and equality holds if and only if $x_{ij}=0$ or $x_{ij}=s_i y_i$ for each $x_{ij}$.
\item For every $i\in F\setminus F'$, the rank at $(x,y)$ of the subsystem $\sum_{j\in D}{x_{ij}} \leq s_iy_i; 0\leq x_{ij},j\in D; 0\leq y_i; y_i\leq 1$ is at most $m+1$ and equality holds if and only if $x_{ij}=0$ or $x_{ij}=s_i y_i$ for each $x_{ij}$.
\end{itemize}
Since the rank is subadditive, we find that the rank of CKFL-LP is at most
\begin{equation}
m+1+m|F'|+(m+1)|F\setminus F'|=m+1+(m+1)n-|F'|.\nonumber
\end{equation}
So, we have $|F'|\leq m+1$ as $m+1+(m+1)n-|F'|\geq (m+1)n$.
\end{proof}

For the uniform capacities case ($s_i=s>0, \forall i\in F$),
we will show a stronger property that there is an optimal solution $(x,y)$ to the LP-relaxation
with at most $m$ noninteger components in $y$.
Indeed, consider an optimal solution $(x,y)$ with $|\{i\mid 0<y_i<1\}|$ minimal. Suppose for contradiction that $y$ has more than $m$ fractional components. Then there exist a client $j$ and two facilities $i_1, i_2$ such that $y_{i_1}$ and $y_{i_2}$ are fractional and $x_{i_1 j}, x_{i_2 j}>0$, and $x_{i_1 j}=s y_{i_1}, x_{i_2 j}=s y_{i_2}$ by Lemma \ref{CKPfractional}. Without loss of generality assume that $c_{i_1 j}\leq c_{i_2 j}$. Let $\epsilon:=\min\{sy_{i_2},s(1-y_{i_1})\}$. Now modify $(x,y)$ by setting
\begin{eqnarray*}
x_{i_1 j}:=x_{i_1 j}+\epsilon&&y_{i_1}:=y_{i_1}+\epsilon/s\\
x_{i_2 j}:=x_{i_2 j}-\epsilon&&y_{i_2}:=y_{i_2}-\epsilon/s,
\end{eqnarray*}
to obtain a new optimal solution, while $|\{i\mid 0<y_i<1\}|$ decreases, a contradiction. Thus, we can find an optimal solution $(x,y)$ to the LP-relaxation for which $y$ has at most $m$ noninteger components.

\vspace*{8pt}
\noindent \textbf{The Algorithm}

\vspace*{8pt}
We convert our original instance to a new instance with at most $k$ clients while incurring some bounded extra costs. Then, at most $2k$ facilities are (fractionally or fully) opened for the new instance according to Lemma \ref{CKPfractional}.

\begin{algorithm}[!htbp]
\caption{A $(1+2\alpha)$-approximation algorithm for CKFL with uniform opening costs using at most $2k$ facilities.}\label{CKFL-approx}
\begin{description}[leftmargin=*]
\item[Input] Finite set $F$ of facilities, $D$ of clients, costs $c\in \Q_{\geq 0}^{F\times D}$, opening cost $f\in \Q_{\geq 0}$, capacities $s\in \Q_{\geq 0}^F$, demands $d\in \Q_{\geq 0}^D$, integer $k\in \Z_{\geq 1}$.
\item[Output] A solution $(x,y)$ to MIP (\ref{eq:MIP})-(\ref{eq:binary constraint}) using at most $2k$ facilities
 that is within a factor $1+2\alpha$ of optimum, if a feasible solution exists.
\item[Description]\mbox{}\\

Suppose exactly $l$ facilities are opened in an optimal solution. That is,
we can consider a stronger constraint $\sum_{i\in F} y_i\leq l.$

\textbf{Step 1}. Reduce the input instance $I_0$ of CKFL to an instance $I_1$ of UKM as follows.\\
Let $F$ and $D$ be the set of facilities and clients of our input instance $I_0$ respectively.
Let $F^\prime=F$ (located at the same sites) be the set of facilities of UKM while with infinite capacities and without opening costs.
Let $D^\prime=D$ be the set of clients of UKM.
Solve this constructed instance (denoted by $I_1$) by the existing $\alpha$-approximation algorithm for UKM.
Suppose we get an integer solution $(x^\prime,y^\prime)$. Note that for UKM, there is an optimal solution
in a form of so-called stars.
That is, each client is served by exactly one open facility.
Without loss of generality, suppose $y^\prime_1=\cdots=y^\prime_l=1$.
Then, we can consider $(x^\prime,y^\prime)$ as $l$ stars $\{T_1,\cdots, T_l\}$,
where $T_r=\{j \in D^\prime \mid x^\prime_{r j}=1\}$ and the center of $T_r$ is the facility $r$.

\textbf{Step 2}. Consolidate clients and construct a new instance $I_2$ of CKFL with at most $l$ clients as follows.\\
For each star $T_r$ in $(x^\prime,y^\prime)$, we set a client $t_r$ at the location of facility $r$ with the total demand of clients in $T_r$, i.e., $d_{t_r}=\sum_{j\in T_r}{d_{j}}$. Let $\bar{D}=\{t_1,\cdots,t_l\}$ be the set of our new clients. Now we get a new instance of CKFL, denoted by $I_2$, with facilities $F$ and clients $\bar{D}$.

\textbf{Step 3}. Find an optimal vertex $({x},{y})$ of the feasible region of
the LP-relaxation to the constructed instance $I_2$ in step 2 with $\sum_{i\in F} y_i=l$.

\textbf{Step 4}. We simply open all the facilities with $y_i>0$ in our original instance $I_0$ and then solve a transportation problem to get an integer solution $(x^*,y^*)$.\mbox{}\\

Since we do not know how many facilities are opened in an optimal solution in advance,
we repeat the above 4 steps for $l:=1,\cdots,k$.
Then, output the solution with smallest total cost.

\end{description}
\end{algorithm}

\begin{theorem}\label{theorem:CKFNU}
By Algorithm \ref{CKFL-approx}, each $\alpha$-approximation algorithm for UKM can be extended to get a $(1+2\alpha)$-approximation algorithm for CKFL with uniform opening costs using at most $2k$ facilities.
\end{theorem}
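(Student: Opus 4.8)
The plan is to fix a CKFL-optimal solution $(x^*,y^*)$ of the input instance $I_0$, let $l^*\le k$ be the number of facilities it opens, and analyse only the iteration of Algorithm~\ref{CKFL-approx} with $l=l^*$; since the algorithm returns the cheapest solution over $l=1,\dots,k$, it suffices to bound the cost produced in that iteration and to check that it opens at most $2k$ facilities. Write $OPT=A+l^*f$, where $A=\sum_{i\in F}\sum_{j\in D}c_{ij}x^*_{ij}$ is the service cost of $(x^*,y^*)$ and $l^*f$ its opening cost; here I use crucially that all opening costs equal $f$. First I would observe that assigning every client $j$ entirely to the nearest facility opened by $(x^*,y^*)$ is a feasible solution of the uncapacitated instance $I_1$ of cost at most $A$, by the averaging inequality $\min_i c_{ij}\le\sum_i c_{ij}(x^*_{ij}/d_j)$; hence the optimum of $I_1$ is at most $A$, and the $\alpha$-approximation in Step~1 returns stars $T_1,\dots,T_l$ (with $\bigcup_r T_r=D$, centres at facilities $1,\dots,l$) of total star cost $C_1:=\sum_r\sum_{j\in T_r}d_jc_{rj}\le\alpha A$.

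The core ingredient is a two-sided consolidation estimate between feasible solutions of $I_0$ and $I_2$, proved via the triangle inequality together with the facts that the consolidated client $t_r$ sits at the site of facility $r$ (so $c_{i,t_r}=c_{ir}$) and that demand may be split in the ratios $d_j/d_{t_r}$. In the forward direction, from any feasible solution of $I_0$ with service cost $S$ and opening cost $g$, aggregating flow as $x_{i,t_r}:=\sum_{j\in T_r}x_{ij}$ produces a feasible solution of $I_2$ with opening cost $g$ and service cost at most $S+C_1$ (using $c_{ir}\le c_{ij}+c_{rj}$ for $j\in T_r$). In the backward direction, from any feasible solution of $I_2$ with flows $x_{i,t_r}$, setting $x_{ij}:=x_{i,t_r}\,d_j/d_{t_r}$ for $j\in T_r$ produces a feasible solution of $I_0$ using only the facilities open in the $I_2$ solution, with service cost at most (service cost in $I_2$)$\,+\,C_1$; capacity feasibility is preserved because the total out-flow of each facility is unchanged. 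Applying the forward direction to $(x^*,y^*)$ yields a feasible $I_2$-solution of cost at most $(A+C_1)+l^*f\le(1+\alpha)A+l^*f$ that opens exactly $l^*$ facilities, so the optimal LP-vertex $(x,y)$ found in Step~3 (subject to $\sum_i y_i=l^*$) has total cost at most $(1+\alpha)A+l^*f$; since $f$ is uniform its opening cost equals $f\sum_i y_i=l^*f$, whence its service cost is $B\le(1+\alpha)A$.

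Next I would bound the number of open facilities and the final cost. Redoing the rank-counting of the proof of Lemma~\ref{CKPfractional} for the system CKFL-LP of $I_2$, which has $|\bar D|=l^*$ clients, with the extra equality $\sum_i y_i=l^*$ in place of the inequality, shows that $(x,y)$ has at most $l^*+1$ fractional $y$-components; combined with $\sum_i y_i=l^*$ this forces the number of facilities with $y_i>0$ to be at most $2l^*\le 2k$ (and $2l^*-1$ in the uniform-capacity case, using the stronger estimate established just before Algorithm~\ref{CKFL-approx}). In Step~4 we open $F':=\{i:y_i>0\}$ in $I_0$; since $\sum_{i\in F'}s_i\ge\sum_i s_iy_i\ge\sum_{j}d_j$ the transportation problem is feasible, and by the backward direction applied to $(x,y)$ its optimum $(x^\star,y^\star)$ has service cost at most $B+C_1\le(1+\alpha)A+\alpha A=(1+2\alpha)A$ and opening cost $|F'|f\le 2l^*f$. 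Therefore $(x^\star,y^\star)$ costs at most $(1+2\alpha)A+2l^*f\le(1+2\alpha)(A+l^*f)=(1+2\alpha)OPT$, where the last step uses $2\le 1+2\alpha$, i.e.\ $\alpha\ge1$. (If $I_0$ is infeasible, every iteration's transportation problem is infeasible and the algorithm correctly concludes infeasibility; the running time is $k$ calls to the $\alpha$-approximation for UKM plus $O(k)$ linear-programming and transportation solves, hence polynomial.)

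The step I expect to be the main obstacle is keeping the overhead down to exactly $2\alpha$. This relies on three points that must all be handled with care: (i) charging $C_1$ against the \emph{service} cost $A$ of the optimum only, which is legitimate precisely because $I_1$ carries no opening costs; (ii) the clean two-sided consolidation estimate, so that the round trip $I_0\to I_2\to I_0$ costs only an additive $C_1$ in each direction; and (iii) the uniformity of $f$, which both pins the opening cost of the Step-3 vertex to exactly $l^*f$ and lets the doubled opening cost $2l^*f$ incurred in Step~4 be absorbed into $(1+2\alpha)l^*f$. Slipping on any one of these — for instance, using only the weaker bound $C_1\le\alpha\,OPT$ — breaks the arithmetic.
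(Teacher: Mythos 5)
Your proposal is correct and follows essentially the same route as the paper: reduce to UKM, consolidate clients at the star centres, solve the LP for the consolidated instance, open every facility with $y_i>0$ (at most $2l^*$ of them by the rank argument of Lemma~\ref{CKPfractional}), solve a transportation problem, and charge the two consolidation detours, each of cost $C_1\le\alpha\cdot(\text{service cost of }OPT)$, plus the doubled opening cost $2l^*f$ against $(1+2\alpha)(A+l^*f)$ --- which is exactly the paper's chain of inequalities written with service and opening costs separated. One trivial slip: the condition $2\le 1+2\alpha$ is equivalent to $\alpha\ge\tfrac12$, not $\alpha\ge 1$, but either way it holds for any approximation factor, so nothing breaks.
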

\begin{proof}
Without loss of generality, suppose exactly $k$ facilities are opened in an optimal solution to our original problem (as we check all the cases in our algorithm).

Let $OPT(\ast)$ denote the optimal cost of the instance $\ast$. We consider the following instances.
\begin{center}
  \begin{tabular}{ c | l  }
    $I_0$ &  the original instance.   \\ \hline
    $I_1$ & \tabincell{l}{the constructed instance in Step 1, \\that is a (uncapacitated) $k$-median problem.}  \\ \hline
    $I_2$ & \tabincell{l}{the constructed instance in Step 2 in which \\we have at most $k$ clients.}    \\
  \end{tabular}
\end{center}

Let $COST(\cdot,\cdot)$ be the total cost of obtained solution $(\cdot,\cdot)$. We consider the following solutions
\begin{center}
  \begin{tabular}{ c | l  }
    $(x',y')$ & the obtained integral solution by $\alpha$-approx. alg. for instance $I_1$.   \\ \hline
    $(x,y)$ & \tabincell{l}{an optimal fractional solution of instance $I_2$.}  \\ \hline
    $(x^*,y^*)$ & \tabincell{l}{an integral solution of instance $I_0$ while using at most $2k$ facilities.}    \\
  \end{tabular}
\end{center}

Clearly, we have $COST(x,y)\leq OPT(I_2),$ and $COST(x',y')\leq \alpha OPT(I_1).$

By the process to construct instance $I_2$, we have $OPT(I_0)+COST(x',y')\geq OPT(I_2)$.
Moreover, we know that $OPT(I_0)\geq OPT(I_1)+k f$.

We will prove
\[
COST(x^*,y^*) \leq  COST(x',y') + COST(x,y)+k f.
\]

We first show that we can obtain an integer solution for $I_2$ with the total cost at most $COST(x,y)+k f$ in Step 4 of Algorithm \ref{CKFL-approx}.
We have $|\{i\mid 0<y_i<1\}|\leq k+1$, since Lemma \ref{CKPfractional} still holds when $\sum_{i\in F} y_i=k$.
Moreover, if $|\{i\mid 0<y_i<1\}|>0,$ then $|\{i\mid y_i=1\}|\leq k-1$. So, we open at most $2k$ facilities.
Thus, the total cost of the obtained solution for $I_2$ is at most $COST(x,y)+kf$.

Then, based on the above solution for $I_2$ we can construct an integer solution for $I_0$ by moving the demand of $t_r$, which is located at the same position with facility $r$,
back to all clients in $T_r=\{j\in D^\prime | x^\prime_{r j}=1\}$ with increasing at most $COST(x',y')$ cost as $COST(x',y')=\sum_{r=1}^k {\sum_{j\in D^\prime}{d_j c_{r,j} x^\prime_{r,j}}}$.
Therefore, the solution obtained by Step 4 has $COST(x^*,y^*) \leq  COST(x',y') + COST(x,y)+k f$.

Then, we have
\begin{eqnarray*}
COST(x^*,y^*) &\leq & COST(x',y')+ OPT(I_2)+kf \\
&= & (OPT(I_2)- COST(x',y'))+ 2 COST(x',y')+kf \\
&\leq & OPT(I_0)+ 2 COST(x',y')+kf \\
&\leq & OPT(I_0)+ 2 \alpha OPT(I_1)+kf \\
&\leq & OPT(I_0)+ 2 \alpha (OPT(I_1)+ kf)\leq (1+2\alpha) OPT(I_0).
\end{eqnarray*}
That is, the approximation ratio is $1+2\alpha$.
\end{proof}

We can obtain the following result as there is a $(3+\epsilon)$-approximation algorithm for the (uncapacitated) $k$-median problem in \cite{AryaGKMMP},
and we can make sure that at most $2k-1$ facilities are opened in step 4 if all capacities are equal.
\begin{corollary}
Algorithm \ref{CKFL-approx} can get an integer solution within $7+\epsilon$ times of the optimal cost by using at most $2k$ facilities ($2k-1$ facilities) for the hard capacitated $k$-facility location problem with uniform opening costs (with uniform opening costs and uniform capacities).
\end{corollary}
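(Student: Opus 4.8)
The plan is to obtain the corollary as a direct specialization of Theorem~\ref{theorem:CKFNU}, and then to tighten the facility count in the uniform-capacity case. First I would invoke the $(3+\epsilon)$-approximation algorithm of Arya et al.~\cite{AryaGKMMP} for the uncapacitated $k$-median problem and use it as the subroutine in Step~1 of Algorithm~\ref{CKFL-approx}. Taking $\alpha = 3 + \epsilon$, Theorem~\ref{theorem:CKFNU} immediately yields a solution to CKFL with uniform opening costs of cost at most $(1 + 2(3+\epsilon))\,OPT(I_0) = (7 + 2\epsilon)\,OPT(I_0)$ using at most $2k$ facilities. Replacing $\epsilon$ by $\epsilon/2$ (harmless, since $\epsilon>0$ is an arbitrarily small constant and \cite{AryaGKMMP} runs in polynomial time for any fixed $\epsilon$), the ratio becomes $7 + \epsilon$, which settles the general (non-uniform) part of the statement.

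For the uniform-capacity refinement, the only extra point to verify is that Step~4 of Algorithm~\ref{CKFL-approx} opens at most $2k-1$ facilities rather than $2k$. Here I would use the strengthened structural fact, proved in the discussion following Lemma~\ref{CKPfractional}, that when all capacities are equal there is an optimal vertex $(x,y)$ of the LP-relaxation of the consolidated instance $I_2$ with $\sum_{i\in F} y_i = l$ whose $y$-vector has at most $|\bar{D}| = l$ noninteger components (rather than the $l+1$ guaranteed in general by Lemma~\ref{CKPfractional}). Taking $l = k$ as in the proof of Theorem~\ref{theorem:CKFNU}, let $t$ be the number of facilities with $0 < y_i < 1$ and $a$ the number with $y_i = 1$. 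Then $t \le k$; and since $a = k - \sum_{0<y_i<1} y_i$ with $\sum_{0<y_i<1} y_i > 0$ whenever $t > 0$, we get $a \le k-1$, so the number of opened facilities $a + t$ is at most $2k-1$ (and at most $k \le 2k-1$ if $t = 0$). All cost estimates in the proof of Theorem~\ref{theorem:CKFNU} carry over verbatim, so the solution is still within $7+\epsilon$ of optimum.

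I expect essentially no obstacle: the corollary is a specialization of Theorem~\ref{theorem:CKFNU}, and the only mild subtlety — saving one facility in the uniform-capacity case — is dispatched by the already-established stronger bound on the number of fractional $y$-components together with the elementary observation that a strictly positive, strictly sub-unit total $y$-mass on the fractional facilities forces at most $k-1$ fully open facilities. I would also note in passing that iterating over $l = 1, \ldots, k$ in Algorithm~\ref{CKFL-approx} only multiplies the running time by $k$, so the overall procedure remains polynomial.
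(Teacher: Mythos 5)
Your proposal is correct and matches the paper's intended argument exactly: the paper derives the corollary by plugging the $(3+\epsilon)$-approximation of Arya et al.\ into Theorem~\ref{theorem:CKFNU} (with the same rescaling of $\epsilon$), and obtains the $2k-1$ bound for uniform capacities from the strengthened fractional-support bound of at most $m$ (rather than $m+1$) components stated after Lemma~\ref{CKPfractional}, combined with the same counting argument that a nonempty fractional part forces at most $k-1$ fully open facilities. You have merely filled in details the paper leaves implicit.
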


\subsection{Extension}
We show how to combine the algorithm in Section \ref{sec:CKFLNU_uniform_opening_costs} with the algorithm
of Charikar and Guha \cite{CharikarG} to improve the approximation ratio for the
capacitated facility location problem (CFL) with uniform opening cost.
In this section, we only consider uniform opening cost.
To simplify the description, sometimes we omit ``with uniform opening cost'' when we refer to the problems.

A $(\beta,\delta)$-approximation algorithm for the (uncapacitated) $k$-median problem (UKM)
outputs an integer solution by using at most $\delta k$ facilities, with service cost at most $\beta$ times the optimal total cost.

\begin{theorem}\label{theorem:CFL}
Each $(\beta,\delta)$-approximation algorithm for the $k$-median problem gives rise to a
$\max\{2\beta+1, \delta+1\}$-approximation algorithm for the CFL with uniform opening costs.
\end{theorem}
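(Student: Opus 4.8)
The plan is to reuse the four-step scheme of Algorithm~\ref{CKFL-approx}, but feed it a $(\beta,\delta)$-approximation for the $k$-median problem instead of a single-factor $\alpha$-approximation, and apply it to CFL by guessing the number $l$ of facilities opened in an optimal CFL solution. Since CFL carries no cardinality constraint we may run the scheme for every $l\in\{1,\dots,|F|\}$ and return the cheapest feasible output (declaring infeasibility if all fail). Fix the correct guess $l=l^{*}$, let $(x^{*},y^{*})$ be an optimal CFL solution, and write $OPT=OPT(I_0)=l^{*}f+S^{*}$, where $S^{*}$ is its service cost; I will show the output for this $l$ costs at most $\max\{2\beta+1,\delta+1\}\cdot OPT$.

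First I would build the UKM instance $I_1$ (same sites, infinite capacities, no opening cost, cardinality $l^{*}$) and run the $(\beta,\delta)$-algorithm, obtaining an integral solution $(x',y')$ that opens $p\le\delta l^{*}$ facilities and has service cost $SC':=COST(x',y')\le\beta\, OPT(I_1)$; assigning each client to the nearest of the $l^{*}$ facilities used by $(x^{*},y^{*})$ exhibits $OPT(I_1)\le S^{*}$, so $SC'\le\beta S^{*}$. Consolidating the $p$ stars of $(x',y')$ into $p\le\delta l^{*}$ clients placed at the star centers produces a CFL instance $I_2$ with at most $\delta l^{*}$ clients. Rerouting the demand of $(x^{*},y^{*})$ to the consolidated locations leaves $y^{*}$ unchanged (so $\sum_i y_i=l^{*}$) and, since each $t_r$ lies at the site of facility $r$ so that $c_{i,t_r}\le c_{i,j}+c_{r,j}$ for $j\in T_r$, raises the service cost by at most $SC'$; hence the LP-relaxation of $I_2$ augmented with $\sum_i y_i=l^{*}$ is feasible and its optimal vertex $(\bar x,\bar y)$ satisfies $COST(\bar x,\bar y)\le OPT+SC'$.

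The crux is the rounding of $(\bar x,\bar y)$. By Lemma~\ref{CKPfractional}, $\bar y$ has at most $|\bar D|+1\le\delta l^{*}+1$ fractional coordinates, and because $\sum_i\bar y_i=l^{*}$ the number of coordinates equal to $1$ is at most $l^{*}-1$ whenever any coordinate is fractional. Thus rounding every positive $\bar y_i$ up to $1$ opens at most $(l^{*}-1)+(\delta l^{*}+1)=(\delta+1)l^{*}$ facilities, which (using that the opening cost is uniform) increases the opening cost by at most $\bigl((\delta+1)l^{*}-l^{*}\bigr)f=\delta l^{*}f$; the capacity inequalities $\sum_{r}\bar x_{i,t_r}\le s_i\bar y_i$ survive the rounding, so retaining $\bar x$ gives a feasible solution of $I_2$ with integral $y$ of cost at most $COST(\bar x,\bar y)+\delta l^{*}f$. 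I expect this ``$+1$ against $-1$'' cancellation (and the verification that feasibility is preserved) to be the step that needs the most care.

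Finally I would spread each consolidated client's assignment back over the original clients of its star; by the same triangle inequality this opens no new facilities and adds at most $SC'$ to the service cost, yielding a feasible solution of $I_0$ of cost at most $COST(\bar x,\bar y)+\delta l^{*}f+SC'\le OPT+\delta l^{*}f+2SC'$. Substituting $OPT=l^{*}f+S^{*}$ and $SC'\le\beta S^{*}$ bounds this by $(1+\delta)l^{*}f+(1+2\beta)S^{*}\le\max\{2\beta+1,\delta+1\}\,(l^{*}f+S^{*})=\max\{2\beta+1,\delta+1\}\cdot OPT$. As every step runs in polynomial time, running the scheme over all $l$ and returning the best solution proves the theorem.
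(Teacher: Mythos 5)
Your proposal is correct and follows essentially the same route as the paper: guess the number $l$ of open facilities, run the four-step consolidation scheme with a $(\beta,\delta)$-algorithm for the median subproblem, bound the fractional support via Lemma~\ref{CKPfractional} to open at most $(\delta+1)l$ facilities at extra opening cost $\delta l f$, and close with the same $\max\{2\beta+1,\delta+1\}$ arithmetic after splitting $OPT$ into opening and service cost. The only differences are cosmetic (you write $OPT(I_1)\le S^{*}$ directly where the paper writes $OPT(I_0)\ge OPT(I_1)+kf$, and you keep the fractional $\bar x$ where the paper re-solves a transportation problem, both of which are equivalent here).
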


\begin{proof}
A crucial observation is that
if exactly $k$ facilities are opened in the optimal solution for an instance $I$ of CFL, then
this solution is also an optimal solution to the corresponding instance of CKFL,
where the input is the same as that in $I$ but with an extra constraint that at most $k$ facilities can be opened.
Thus, if for each $k=1,\cdots,n, n=|F|$ we can obtain the optimal solution for CKFL,
then the solution with smallest total cost is the optimal solution for CFL.

Our algorithm for CFL is as follows:
Repeat the 4 steps in Algorithm \ref{CKFL-approx} for $l:=1,\cdots,n$, $n=|F|$. In each iteration, we consider the constraint $\sum_{i\in F} y_i\leq l.$ Then, output the solution with smallest total cost.

To get a better approximation ratio for CFL, we replace the $\alpha$-approximation algorithm for the $k$-median problem in Step 1 of Algorithm \ref{CKFL-approx} by a $(\beta,\delta)$-approximation algorithm.
Then, for each iteration $l$, we obtain an integer solution for the instance $I_1$ with
at most $\delta l$ open facilities.
Thus, we have at most $\delta l$ clients in instance $I_2$.

Again, without loss of generality, suppose exactly $k$ facilities are opened in an optimal solution for the original instance $I_0$.

We maintain all the notations in the proof of Theorem \ref{theorem:CKFNU}.
Notice that we have $COST(x,y)\leq OPT(I_2),$
$OPT(I_0)+COST(x',y')\geq OPT(I_2)$ and $OPT(I_0)\geq OPT(I_1)+k f$ still hold.
Moreover, $COST(x',y')\leq \beta OPT(I_1)$,

We will prove
\[
COST(x^*,y^*) \leq  COST(x',y') + COST(x,y)+\delta k f.
\]

The proof is similar to that in proof of Theorem \ref{theorem:CKFNU}.
We first show that we can obtain an integer solution for $I_2$ with the total cost at most $COST(x,y)+\delta k f$ in step 4.
Note that $|\{i\mid 0<y_i<1\}|\leq \delta k+1$, since Lemma \ref{CKPfractional} still holds when $\sum_{i\in F} y_i=k$.
So, at most $(\delta+1) k$ facilities are opened at the end,
since if $|\{i\mid 0<y_i<1\}|>0,$ then $|\{i\mid y_i=1\}|\leq k-1$.
Thus, the total cost of the obtained solution for $I_2$ is at most $COST(x,y)+ \delta k f$.

Then, by moving the demand of $t_r$
back to all clients in $T_r=\{j\in D^\prime | x^\prime_{r j}=1\}$,
we can construct an integer solution for $I_0$. This operation increases at most $COST(x',y')$ cost as
$COST(x',y')=\sum_{r=1}^k {\sum_{j\in D^\prime}{d_j c_{r,j} x^\prime_{r,j}}}$.
Therefore, the solution obtained by Step 4 has $COST(x^*,y^*) \leq  COST(x',y') + COST(x,y)+ \delta k f$.

Then, we have
\begin{eqnarray*}
COST(x^*,y^*) &\leq & COST(x',y')+  OPT(I_2)+\delta kf\\
&\leq & COST(x',y')+ OPT(I_0)+ COST(x',y')+ \delta kf \\
&= & OPT(I_0)+ 2 COST(x',y')+\delta kf\\
&\leq &  OPT(I_0)+ 2 \beta OPT(I_1)+\delta kf \\
&\leq & OPT(I_0)+ 2 \beta (OPT(I_0)-kf)+\delta kf \\
&\leq & (1+2 \beta) (OPT(I_0)-kf)+(\delta+1) kf.
\end{eqnarray*}
Recall that we assume that exactly $k$ facilities are opened in the optimal solution to $I_0$.
So, the total service cost of the optimal solution to $I_0$ is $OPT(I_0)-kf$.
Then, $COST(x^*,y^*)\leq  \max\{2\beta+1, \delta+1\} OPT(I_0).$
\end{proof}

\begin{theorem}(\cite{CharikarG})\label{theorem:bifactor_UFL}
Let $SOL$ be any solution to the uncapacitated facility location problem (possibly fractional), with facility cost $F_{SOL}$
and service cost $C_{SOL}$. For any $\gamma>0$, the local search heuristic proposed (together with scaling)
gives a solution with facility cost at most $(1+\frac{2}{\gamma})F_{SOL}$ and service cost at most $(1+\gamma)C_{SOL}.$
The approximation is up to multiplicative factors of $(1+\epsilon)$ for arbitrarily small $\epsilon>0.$
\end{theorem}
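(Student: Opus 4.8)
Because the statement ascribes the guarantee to a local search heuristic together with cost scaling, the plan is to analyse precisely such an algorithm for uncapacitated facility location. The algorithm keeps a set $S\subseteq F$ of open facilities, always routes every client to its cheapest open facility, and, from an arbitrary feasible $S$, repeatedly performs any move that strictly lowers the total cost, the moves being an \emph{add} (open a facility), a \emph{close/swap} (close a facility, possibly simultaneously opening another), and the refinements of these introduced by \cite{CharikarG}. The scaling ingredient is to run this search not on the given instance but on the instance in which each opening cost $f_i$ is replaced by $\lambda f_i$ for a parameter $\lambda>0$ to be tied to $\gamma$ at the end. Let $S$ be a local optimum of the scaled instance; write $F_S=\sum_{i\in S}f_i$ and $C_S$ for its facility and service cost in the \emph{original} costs, and let $SOL$ be the reference solution with costs $F_{SOL}$ and $C_{SOL}$.

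The first key inequality is read off from the add moves. For a facility $o$ used by $SOL$, opening $o$ does not lower the scaled cost, and re-routing to $o$ exactly the demand that $SOL$ sends through $o$ lower-bounds the resulting service saving, so $\lambda f_o\ge\sum_j x^{SOL}_{oj}\,(c_j(S)-c_{oj})$, where $c_j(S)$ is $j$'s current per-unit service cost. Summing over $o$ and using that $c_{oj}$ is exactly what $SOL$ pays per unit for $j$ at $o$ gives a service estimate of the form $C_S\le C_{SOL}+\lambda F_{SOL}$; a fractional $SOL$ is reduced to this case by the standard trick of splitting $SOL$ so that every facility is opened integrally, or by a weighted add move. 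The second key inequality is read off from the close/swap moves: one fixes a map $\phi$ sending each $i\in S$ to a facility of $SOL$ ``near'' it (say, one carrying the largest fraction of $i$'s demand in $SOL$), groups $S$ by $\phi$, and tests, for each $SOL$-facility that is the $\phi$-image of a single $i\in S$, the swap closing $i$ and opening it, while the remaining members of $S$ are tested by closures whose re-routing is realised through $SOL$-facilities that have already been charged to other members of $S$ together with the facilities kept open. Summing the resulting non-improvement inequalities, after bounding each client's re-routing detour by a bounded multiple of its costs in $S$ and in $SOL$, gives a facility estimate of the form $\lambda F_S\le\lambda F_{SOL}+(\text{a small multiple of }C_S+C_{SOL})$, and it is the refined moves of \cite{CharikarG} that are needed to keep the coefficient of $\lambda F_{SOL}$ here as small as $1$.

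It then remains to combine the two estimates and fix the scale. Eliminating $C_S$ from the facility estimate using $C_S\le C_{SOL}+\lambda F_{SOL}$ and dividing by $\lambda$ leaves a bound of the form $F_S\le F_{SOL}+O(1/\lambda)\cdot(\text{service terms})$, and choosing $\lambda$ proportional to $\gamma$ balances the two estimates so that the service overhead becomes a $\gamma$-fraction of $C_{SOL}$ and the facility overhead a $(2/\gamma)$-fraction of $F_{SOL}$; a short computation then delivers $C_S\le(1+\gamma)C_{SOL}$ and $F_S\le(1+\tfrac{2}{\gamma})F_{SOL}$. Finally, to obtain a polynomial-time algorithm one accepts a move only when it improves the scaled cost by at least a factor $1-\epsilon'$ with $\epsilon'=\Theta(\epsilon/|F|)$; this bounds the number of iterations polynomially in the input size and degrades every inequality above by at most a factor $1+\epsilon$, which is exactly the ``up to multiplicative $(1+\epsilon)$'' in the statement.

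I expect the genuinely hard part to be the facility-cost estimate from the close/swap moves together with the choice of scale that decouples the two guarantees: constructing $\phi$, handling $SOL$-facilities that are the image of several facilities of $S$, selecting which refined moves to test, and ensuring each client's re-routing cost is charged only a bounded number of times is what controls the coefficient of $F_{SOL}$ and hence the constant $2$ in $1+2/\gamma$. By comparison, the add-move estimate, the reduction from fractional to integral $SOL$, and the $(1+\epsilon)$ slow-down are routine once that analysis is in hand.
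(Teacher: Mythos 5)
The paper does not prove this statement at all: Theorem~\ref{theorem:bifactor_UFL} is imported verbatim from \cite{CharikarG} and used as a black box, so there is no in-paper proof to compare yours against. Your outline does follow the actual Charikar--Guha strategy (local search with add and close/swap moves, scaling of the facility costs, and the $(1+\epsilon)$ loss from accepting only moves that improve the cost by a fixed relative amount), so as a reconstruction of the cited argument it is pointed in the right direction.

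As a proof, however, it has two concrete gaps. First, the close/swap analysis --- the step that produces the constant $2$ and is, as you say yourself, the heart of the matter --- is only described, not carried out; and the inequality you write for it, $\lambda F_S\le\lambda F_{SOL}+(\text{small multiple of }C_S+C_{SOL})$, is not strong enough. After eliminating $C_S$ via the add-move bound $C_S\le C_{SOL}+\lambda F_{SOL}$, any positive coefficient on $C_S$ feeds back a $\lambda F_{SOL}$ term and pushes the coefficient of $F_{SOL}$ in the facility bound strictly above $1$, which destroys the $(1+\frac{2}{\gamma})F_{SOL}$ form; what one actually needs (and what \cite{CharikarG} prove) is a bound of the shape $\lambda F_S\le \lambda F_{SOL}+2C_{SOL}$ with no $C_S$ term and coefficient exactly $2$. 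Second, ``choosing $\lambda$ proportional to $\gamma$'' does not yield the stated multiplicative guarantees: from $C_S\le C_{SOL}+\lambda F_{SOL}$ and $F_S\le F_{SOL}+\frac{2}{\lambda}C_{SOL}$ one needs $\lambda F_{SOL}\le\gamma C_{SOL}$ and $\frac{2}{\lambda}C_{SOL}\le\frac{2}{\gamma}F_{SOL}$ simultaneously, which forces $\lambda=\gamma\,C_{SOL}/F_{SOL}$ --- a scale depending on the reference solution, not on $\gamma$ alone --- and one must additionally argue that this scale is available to the algorithm (in the application it is, since $SOL$ is the computed LP solution, and \cite{CharikarG} handle the general case in their scaling lemma). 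Since this theorem is cited rather than proved in the paper, the appropriate move here is to cite it; if you do want to prove it, the swap analysis and the exact choice of scale are the parts that must be written out in full.
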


Based on Theorem \ref{theorem:bifactor_UFL}, we can obtain the following corollary.
\begin{corollary}\label{corollary:UKM}
For any $\gamma>0$, there exists a $((1+\epsilon)(1+\gamma),(1+\epsilon)(1+\frac{2}{\gamma}))$-approximation algorithm for the $k$-median problem, where $\epsilon>0$ can be arbitrarily small.
\end{corollary}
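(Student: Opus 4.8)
The plan is to reduce the given $k$-median instance to an uncapacitated facility location (UFL) instance by attaching a uniform opening cost to every facility, and then to invoke the bifactor guarantee of Theorem~\ref{theorem:bifactor_UFL} as a black box. Concretely, given a $k$-median instance $\mathcal{I}$ on facility set $F$ and client set $D$, I would fix any value $f>0$ (for instance $f=1$) and build the UFL instance $\mathcal{I}'$ on the same metric, the same clients, and the same potential facilities, where each facility $i\in F$ now has opening cost $f$. The proposed algorithm for UKM is then: run the Charikar--Guha local search heuristic (with scaling) of Theorem~\ref{theorem:bifactor_UFL} on $\mathcal{I}'$ with the prescribed parameter $\gamma$, and return its set $T\subseteq F$ of opened facilities together with the induced nearest-facility assignment.

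For the analysis I would let $S^\ast\subseteq F$, $|S^\ast|\le k$, be the facility set of an optimal $k$-median solution for $\mathcal{I}$ and let $C^\ast=OPT(\mathcal{I})$ be its service cost. Viewing ``open exactly $S^\ast$ and serve each client from its nearest facility in $S^\ast$'' as a feasible (integral) solution $SOL$ to the UFL instance $\mathcal{I}'$, we have $F_{SOL}=|S^\ast|\,f\le kf$ and $C_{SOL}=C^\ast$. Since Theorem~\ref{theorem:bifactor_UFL} bounds the output of the local search relative to \emph{any} solution $SOL$, it applies to this particular one, so the returned solution has facility cost at most $(1+\epsilon)(1+\tfrac{2}{\gamma})\,kf$ and service cost at most $(1+\epsilon)(1+\gamma)\,C^\ast$. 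Dividing the facility-cost bound by $f$ gives $|T|\le(1+\epsilon)(1+\tfrac{2}{\gamma})\,k$, i.e.\ at most $\delta k$ open facilities with $\delta=(1+\epsilon)(1+\tfrac{2}{\gamma})$, and the service-cost bound reads ``at most $\beta$ times $OPT(\mathcal{I})$'' with $\beta=(1+\epsilon)(1+\gamma)$, using that the optimal total cost of a $k$-median instance equals its optimal service cost. This is exactly a $\big((1+\epsilon)(1+\gamma),(1+\epsilon)(1+\tfrac{2}{\gamma})\big)$-approximation algorithm for UKM, and it is polynomial since the local search is.

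The one subtlety I would flag is that the local search is run on $\mathcal{I}'$ \emph{without} knowledge of $k$ or of $S^\ast$: its output $T$ is determined once $\mathcal{I}'$ and $\gamma$ are fixed. The argument goes through precisely because Theorem~\ref{theorem:bifactor_UFL} is a ``for all $SOL$'' statement about that fixed output, so we are free to instantiate $SOL$ with the $k$-median optimum only in the analysis. It is also worth noting (to dispel any worry about the choice of $f$) that the conclusion is scale-invariant in $f$: multiplying $f$ by a constant multiplies both $F_{SOL}$ and the facility cost of the output by that constant, leaving the ratio $|T|=(\text{facility cost})/f$ unchanged, so any $f>0$ works.
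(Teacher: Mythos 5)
Your proof is correct and follows essentially the same route as the paper: attach a uniform opening cost $f$ to every facility, invoke the bifactor guarantee of Theorem~\ref{theorem:bifactor_UFL} against a benchmark solution with facility cost at most $kf$ and service cost equal to the $k$-median optimum, and divide the output's facility cost by $f$ to bound the number of opened facilities. The only (immaterial) difference is that you instantiate $SOL$ with the integral $k$-median optimum while the paper uses the optimal LP-relaxation solution of UKM as the benchmark; both satisfy $F_{SOL}\le kf$ and yield the stated $\bigl((1+\epsilon)(1+\gamma),(1+\epsilon)(1+\tfrac{2}{\gamma})\bigr)$ guarantee.
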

\begin{proof}
Let $(x',y')$ be an optimal solution with total cost $T$ to the LP relaxation of UKM.
A crucial observation is that $(x',y')$
is also a feasible fractional solution to UFL with uniform opening cost $f>0$.
Let $SOL=(x',y')$ with total facility cost $F_{SOL}$
and total service cost $C_{SOL}$.
Note that $C_{SOL}=T$ and $F_{SOL}\leq kf.$

Now, it is easy to see that, based on the Charikar and Guha algorithm \cite{CharikarG}, we could get an
integer solution with at most $(1+\gamma)(1+\epsilon)$ times the optimal cost while using at most  $(1+\frac{2}{\gamma})(1+\epsilon)k$
facilities for the $k$-median problem. That is, there exists a  $((1+\epsilon)(1+\gamma),(1+\epsilon)(1+\frac{2}{\gamma}))$-approximation algorithm for the $k$-median problem, where $\epsilon>0$ can be arbitrarily small.

\end{proof}

The following theorem can be obtained by combining Corollary \ref{corollary:UKM} with Theorem \ref{theorem:CFL} and setting $\gamma=0.78078.$
\begin{theorem}
There is a $4.562(1+\epsilon)$-approximation algorithm for the
capacitated facility location problem with uniform opening costs, where $\epsilon>0$ can be arbitrarily small.
\end{theorem}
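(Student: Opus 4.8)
The plan is to instantiate the bifactor reduction of Theorem~\ref{theorem:CFL} with the bifactor $k$-median algorithm of Corollary~\ref{corollary:UKM} and then optimize the single free parameter. Concretely, Corollary~\ref{corollary:UKM} supplies, for every $\gamma>0$, a $(\beta,\delta)$-approximation algorithm for the $k$-median problem with $\beta=(1+\epsilon)(1+\gamma)$ and $\delta=(1+\epsilon)(1+\tfrac{2}{\gamma})$. Feeding this algorithm into the statement of Theorem~\ref{theorem:CFL} immediately yields an algorithm for the capacitated facility location problem with uniform opening costs whose approximation ratio is
\[
\max\{2\beta+1,\ \delta+1\}=\max\Bigl\{2(1+\epsilon)(1+\gamma)+1,\ (1+\epsilon)\bigl(1+\tfrac{2}{\gamma}\bigr)+1\Bigr\}.
\]

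Next I would strip off the $(1+\epsilon)$ factors: since each branch has the shape $(1+\epsilon)\cdot(\text{const})+1\le(1+\epsilon)\cdot(\text{const}+1)$, the ratio is at most $(1+\epsilon)\,g(\gamma)$ with $g(\gamma):=\max\{\,2\gamma+3,\ \tfrac{2}{\gamma}+2\,\}$, and renaming $\epsilon$ absorbs the slack. It then remains to minimize $g$ over $\gamma>0$. The first term $2\gamma+3$ is strictly increasing and the second term $\tfrac{2}{\gamma}+2$ strictly decreasing, so the minimum is attained where the two coincide: $2\gamma+3=\tfrac{2}{\gamma}+2$, equivalently $2\gamma^2+\gamma-2=0$, with positive root $\gamma=\tfrac{-1+\sqrt{17}}{4}\approx 0.78078$. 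Substituting back gives $g(\gamma)=2\gamma+3\approx 4.5616<4.562$, so with this choice of $\gamma$ the combined procedure is a $4.562(1+\epsilon)$-approximation algorithm for CFL with uniform opening costs.

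Finally I would record, as in the proof of Theorem~\ref{theorem:CFL}, that the algorithm simply runs the four-step scheme of Algorithm~\ref{CKFL-approx} once for each guessed number $l=1,\dots,|F|$ of facilities open in the optimum, using the Charikar--Guha bifactor routine from Corollary~\ref{corollary:UKM} inside Step~1, and returns the cheapest solution produced; polynomial running time is inherited from the ingredients. I do not expect a genuine obstacle here: essentially all the work is already done in Theorem~\ref{theorem:CFL} and Corollary~\ref{corollary:UKM}, and the only point requiring a line of care is that the two $(1+\epsilon)$ factors enter both branches of the maximum multiplicatively and uniformly, so they do not disturb the balancing of $\gamma$; everything else is the routine one-variable optimization above.
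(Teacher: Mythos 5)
Your proposal is correct and follows exactly the route the paper intends: instantiate Theorem~\ref{theorem:CFL} with the bifactor algorithm of Corollary~\ref{corollary:UKM}, balance $2\gamma+3$ against $\tfrac{2}{\gamma}+2$ to get $\gamma=\tfrac{-1+\sqrt{17}}{4}\approx 0.78078$, and absorb the $(1+\epsilon)$ factors multiplicatively. The paper states this combination without writing out the optimization, and your calculation fills in precisely the intended details.
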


\bibliographystyle{alpha}
\bibliography{KFL}

\end{document}